  \newtheorem{theorem}{Theorem}[section]
  \newtheorem{proposition}[theorem]{Proposition}
  \newtheorem{lemma}[theorem]{Lemma}
  \newtheorem{corollary}[theorem]{Corollary}
 \theoremstyle{definition}
 \newtheorem{algorithm}[theorem]{Algorithm}
  \newtheorem{definition}[theorem]{Definition}
  \newtheorem{remark}[theorem]{Remark} 
  \newtheorem{example}[theorem]{Example}
  \newtheorem{examples}[theorem]{Examples}
\newcommand{\abs}[1]{\lvert {#1} \rvert }
\newcommand{\set}[1]{\{#1\}}
\newcommand{\rta}{\rightarrow}
\newcommand{\lta}{\leftarrow}
\newcommand{\Rta}{\longrightarrow}
\newcommand{\rto}{\mapsto}
\newcommand{\eps}{\epsilon}
\newcommand{\lam}{\lambda}
\newcommand{\Lam}{\Lambda}
\newcommand{\sig}{\sigma}
\newcommand{\barf}{\overline{f}}
\newcommand{\barbare}{\overline{\overline{e}}} 
\newcommand{\barr}{\overline{ \;\; }}
\newcommand{\Bzero}{\mathbf{ 0}}
  \newcommand{\Bone}{\mathbf{ 1}}
  \newcommand{\Ba}{\mathbf{ a}}
  \newcommand{\Bc}{\mathbf{ c}}
  \newcommand{\Bu}{\mathbf{ u}}
  \newcommand{\Bdel}{\mathbf{ \delta}}
    \newcommand{\BP}{\mathbf{ P}}
  \newcommand{\BT}{\mathbf{ T}}
  \newcommand{\BA}{\mathbf{A}}
  \newcommand{\BK}{\mathbf{K}}
  \newcommand{\BI}{\mathbf{I}}
  \newcommand{\BJ}{\mathbf{J}}
  \newcommand{\BLam}{\mathbf{ \Lam}}
  \newcommand{\BTheta}{\mathbf{ \Theta}}
  \newcommand{\Bystar}{\mathbf{y}^\ast}
 \newcommand{\Bz}{\mathbf{ z}}  
  \newcommand{\BZ}{\mathbf{ Z}}
  \newcommand{\BY}{\mathbf{ Y}}
\newcommand{\tildeG}{\widetilde{G}}
\newcommand{\F}{{\mathbb F}}
\newcommand{\N}{\mathbb N}
\newcommand{\R}{{\mathbb R}}
\newcommand{\tr}{^\dagger}
\newcommand{\ubm}[2]{\underbrace{#1}_{#2}}
\begin{document}

\title{The Sum-Product Algorithm for Degree-$2$ Check Nodes and
  Trapping Sets
  }
  \thanks{This research was supported by NSF CCF-Theoretical Foundations grants \#0635382 and  \#0635389}

\date{January 4, 2011 version}

\author{John O. Brevik}
\address{Department of Mathematics and Statistics, Long Beach State University}
\email{jbrevik@csulb.edu}

\author{Michael E. O'Sullivan}
\address{Department of Mathematics and Statistics, San Diego State University}
\email{mosullivan@mail.sdsu.edu}

  \maketitle

 \begin{abstract}
 The sum-product algorithm for decoding of binary codes is analyzed for
 bipartite graphs in which the check nodes all have degree $2$.  The
 algorithm simplifies dramatically and may be expressed using linear
algebra.  Exact results about the convergence of the algorithm are
derived and applied to trapping sets.  
 \end{abstract}

\section{Introduction}
\label{s:intro}

One of the great achievements in coding theory in the last decade or
so has been the discovery of iterative decoding methods,
such as the sum-product algorithm (SPA).
Experimental results on very long codes have yielded performance
that is extremely close to Shannon capacity \cite{Mac99},
and asymptotic analysis shows that ensembles of irregular codes
achieve capacity  \cite{RichShokUrb01}.
Aside from the asymptotic theory, which is
presented in detail in \cite{RichUrb:Book},
there is little that has been proven
about the performance of the sum-product algorithm.
The girth and expansion coefficient of the associated bipartite graph,
as well as low-weight pseudo-codewords  and trapping sets (or near-codewords),  are 
thought to affect the performance of the SPA, but we are not aware of
any theorems that quantify the relationship for finite-length codes.
In this article, we focus on a very special case for which we can derive
exact results for convergence of the sum-product algorithm.
By establishing some simple but provable results, we hope to build a
foundation for further algebraic analysis.

We are solely concerned with binary codes. Fix such a code, defined as  
the right null space of a
check matrix $H$.  The SPA is most easily described via the bipartite  
graph of $H$, which has a check node for each row and a bit node for
each column, with an edge between check node $r$ and bit node $\ell$  
if and only if $H_{r\ell}=1$.
The initial data for the algorithm consists of a probability  
distribution for
each bit, indicating the likelihoods that the transmitted signal for  
that
bit is a~$0$ or a~$1$.   The SPA then passes likelihood  data along the
edges of the bipartite graph from the check nodes to the bit nodes and  
back again.
Since we wish to analyze the SPA, we will ignore the binary matrix
defining the code, focusing on the equivalent description via the
bipartite graph.

In this article we study bipartite graphs in which the check nodes all
have  degree~$2$.  The SPA simplifies dramatically and may be
analyzed using  linear  algebra.  The codes defined by such
graphs are repetition codes, provided the graph is connected, and
therefore not of practical utility themselves.
Nevertheless, there are surprising subtleties in the results.
For example, our results indicate that in the simple case of
degree-$2$ checks, 
a covering graph   can inherit provably bad convergence properties from its base graph.
Furthermore, our results can be applied to the dynamics of the SPA in the  
presence of  trapping sets, which have been studied extensively  
 and identified as a powerful influence on decoding  
performance. In essence, we enhance a trapping set by adding some  
nodes that in some sense ``virtually" supply messages from the rest of  
the graph; we then use our methods to study the SPA on this enhanced  
graph.

We are not aware of any previous work on this
class of graphs, but, interestingly, there are
several articles related to the opposite case, in which all bit nodes
have degree $2$.  These codes are called {\em cycle codes} in
\cite{Kot_cycle,Kot_pseudo}, which explore the connection to zeta  
functions of
graphs and the fundamental cone of a parity-check matrix.
Cycle codes were originally called {\em graph theoretic codes}
or {\em circuit codes}  in \cite{HakimiBredeson} and in previous work  
cited
therein.
Cycle codes from Ramanujan graphs are studied in \cite{TillZemor}.

The following section presents some necessary graph-theoretic
terminology and the study of the {\it flow graph} arising from the
SPA.  Section~\ref{s:SPA} presents the sum-product algorithm and
the simplifications due to having all checks of degree~$2$.
Section~\ref{s:convergence} contains the two main theorems on
convergence of the SPA.  Section~\ref{s:examples} presents several
examples to illustrate the results. In Section~6 we analyze the
SPA on trapping sets. In Section~7
we make several observations about the theoretical results
and present some simulations of the SPA on small examples.
These lead to questions for further investigation.

\section{A Discussion of Graphs}
\label{s:graphs}
The sum-product algorithm is defined via a bipartite graph, but our
analysis of the algorithm will use two other graphs derived from the
bipartite graph. 
First, 
a bipartite graph with all check nodes of degree $2$ yields an
undirected graph, and conversely.
Second, the flow of information of the SPA suggests the construction
of a graph whose vertices are the edges of the original bipartite
graph.  This is similar to the traditional notion of a {\em line graph} \cite{Harary:1969}, but with some differences. In particular, our flow graph is a directed graph, containing one vertex for each {\em direction} of each edge in the bipartite graph.
Finally, we also find it useful to allow graphs with loops and with multiple
edges between a given pair of nodes.

In this section, we gather the formal definitions that we will use for
bipartite graph, directed graph, and undirected graph; we present the 
constructions described above; and we establish
some connectivity properties that will be used to analyze the SPA.

\begin{definition}
A {\em directed graph} is a 4-tuple $(E,V, \sig, \tau)$ consisting of
a set $E$ of edges, a set $V$ of vertices 
(or nodes), and two maps $\sig:E\rta V$
and $\tau: E\rta V$ giving the {\em source} and {\em terminus} of an edge.  
\end{definition}

The definition allows for $\sig(e)=\tau(e)$, in which case $e$ is 
a {\it loop}.  The definition also allows for distinct edges $e$ and $f$ to
have both the same source and the same terminus, which gives
{\it parallel edges}.

\begin{definition}
An {\em undirected graph} is a directed graph with an involution
$E \rta E: e \rto  \bar e$ satisfing $\bar e\not= e$, 
$\barbare=e$ and $\tau(\bar e)= \sig (e)$ for all $e\in E$.
\end{definition}

As with a directed graph, loops and parallel edges are allowed.
Note that with this definition, each edge depicted in the conventional drawing of an
undirected graph represents two conjugate edges.  
We will use double-headed arrows to emphasize this aspect of our conventions.

From a given directed graph $G$, there is an obvious way to obtain an
undirected graph $U(G)$: One simply adds a set of conjugate edges,  
$\overline{E}=\{\bar e \mid e\in E\}$.  The undirected graph $U(G)$
has edge set  $E\cup\overline{E}$, vertex set $ V$, the obvious
conjugation map and source and terminus maps extending 
$\sig$ and $\tau$ to $E \cup \overline {E}$ by $\sig(\bar e) =
\tau(e)$ and $\tau( \bar e) = \sig(e)$.

\begin{definition}
A {\it bipartite graph} is a directed graph along with a partition of
the vertex set into two sets $V_1$, $V_2$ such that every edge 
has source in $V_1$ and terminus in $V_2$.
\end{definition}

We will think of the edges going from ``left'' to ``right,'' so we will
write a bipartite graph as a 5-tuple $B=(E,L,R,\lam,\rho)$,  
in which $L$ and $R$ are the source nodes and terminus nodes,
respectively.  The maps  $\lam:E\rta L$ and $\rho:E\rta R$ give the
source  and terminus of an edge. 

For the purpose of error-control coding, the elements of  $L$ are
typically called {\em bit nodes} and the elements of $R$ {\em check nodes}. 
A {\em codeword} is an association of $0$ or $1$ to
each $\ell\in L$ such that each $r\in R$ is connected to an even number
of nonzero bits.  
A binary matrix $H$ yields a bipartite graph by taking $R$ to be the
set of  rows of $H$, $L$ the 
set of columns of $H$ and $E$ enumerating the nonzero entries of $H$, so
that for $e$ the edge associated to
the nonzero entry $H_{r \ell}$, $\lam(e)=\ell$ and $\rho(e)=r$.
A vector in the right nullspace of $H$ gives a codeword for the
associated graph.
Note that while our definition allows for bipartite graphs that have
parallel edges, such a graph clearly does not correspond to a binary
matrix. 

Given an undirected graph $G=(E,V,\sig,\tau)$, we can form a  bipartite graph $(E,L,R,\lambda,\rho)$ in
which all check nodes have degree $2$ by putting a check node on each
edge and making all edges point to the check nodes: Formally, take $L= V$, $\lam = \sig$, and let $R= E/\barr$ be the set $E$ modulo the 
equivalence relation defined by the involution $\overline{\,\,}$; now
let $\rho$ be the $2$-to-$1$ map $\rho: E \rta R$.
The construction is illustrated in
Figure~\ref{f:undirected_bipartite}. 

Conversely, we may form an undirected graph from a bipartite graph 
$B= (E, L, R, \lam, \rho)$ that has  all check nodes of degree~$2$ by
removing each check node and  treating the two edges meeting
at a check node as a conjugate pair.
Formally, for any edge $e\in E$ let $\bar e$ be the unique edge, distinct from
$e$, sharing a node in $R$ with $e$.  Then let $V=L$, $\sig= \lam$ and
let $\tau(e) = \sig(\bar{e})$.  
We will say that this graph is 
{\it the undirected graph associated to} $B$. 
It is clear that the two constructions are inverse operations.
Note that the undirected graph associated to $B$ is only defined
when $B$ has check nodes of degree~$2$, and it is not $U(B)$.

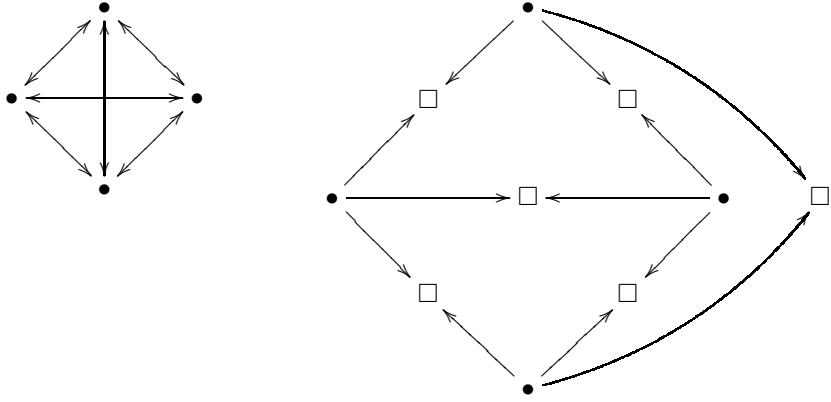
\begin{figure}[hbt]
$$\xymatrix{
& \bullet \ar@{<->}[dr] & \\
\bullet \ar@{<->}[ur]\ar@{<->}[dr] \ar@{<->}[rr] &&\bullet \\
& \bullet \ar@{<->}[ur] \ar@{<->}[uu] & 
}
 \qquad \qquad
\xymatrix{
& & \bullet \ar@{->}[dl]\ar@{->}[dr] \ar@/^1pc/[ddrrr]& &  &\\
&\square && \square &   & \\
\bullet  \ar@{->}[ur]\ar@{->}[dr] \ar@{->}[rr] & & 
\square &&\bullet \ar@{->}[ll]\ar@{->}[ul]\ar@{->}[dl] 
 & \square \\
&\square && \square &   & \\
& & \bullet \ar@{->}[ul] \ar@{->}[ur] \ar@/_1pc/[uurrr] & &  &\\
}
$$

\caption{\label{f:undirected_bipartite}
$K_4$ and the bipartite graph associated to it.
}
\end{figure}

We will see in the next section that the SPA on a bipartite graph with
checks of degree $2$ simplifies to a linear algorithm, which we
describe via the associated undirected graph $G$.  
The flow of information for the SPA follows paths in $G$
that have no backtracking (no edge can follow its conjugate).
Our analysis of the SPA will produce a matrix that is the 
adjacency matrix of a graph $\tildeG$ derived from $G$.
The rest of this section is devoted to the definition and analysis of
$\tildeG$.

Let $G= (V, E, \sig, \tau)$  be a directed graph.  
We write a path in $G$ as a sequence of edges $e_1\circ e_2\circ
\cdots\circ e_n$ such
that $\tau(e_i) = \sig(e_{i+1})$ for $i=1,\dots,n-1$.  
We will say that $\sig(e_1)$  is connected to $\tau(e_n)$ by this
path.  
Notice that the relation ``is connected to'' is transitive, but not
necessarily symmetric.
Recall that $G$ is {\it strongly connected}
when any vertex $v$ is connected to any other vertex $w$.
For an undirected graph  $G$, 
we will say the path  $e_1 \circ e_2 \circ\cdots \circ e_n$ 
is {\it admissible} when  when the path does not involve any
``backtracking," 
that is, $e_{i+1} \ne \bar e_i$ for $i=1,\dots,n-1$.

The path $e_1\circ e_2\circ \cdots\circ e_n$ is a cycle when
$\tau(e_n)=\sig(e_1)$.  
This cycle is {\em completely admissible}
if it is admissible as a path and also $e_n\neq \bar e_1$. Thus a completely
admissible cycle is one that is admissible when traversed starting at
any of its vertices.

Let $G$ be an  undirected graph.  The {\em flow graph} of $G$ is the graph
$\tildeG$ with vertex set $E$ and edge set 
$\set{(e,f) : \tau(e) = \sig(f)\text{ and } \barf \ne e}$.
The source and terminus maps are projections:
$\tilde \sig(e,f) = e$ and $\tilde \tau (e,f) = f$.
There is a natural identification of paths in $\tildeG$ with
admissible paths in $G$.  
An admissible path $e_1\circ e_2 \circ \cdots \circ e_n$ of length $n$
in $G$ yields a path $(e_1,e_2) \circ (e_2,e_3)\circ \cdots \circ 
(e_{n-2}, e_{n-1}) \circ (e_{n-1},e_n)$ of length $n-1$ in $\tildeG$, and conversely.

\begin{remark} The flow graph is different from  the line graph $L(G)$ \cite{Harary:1969},
since we include a vertex in $\tilde{G}$ for each directed edge of
$G$.  It is also different from the usual line graph of a directed
graph \cite{Beineke:1978}, since there is no edge in $\tilde{G}$  between $e$ and
$\bar{e}$.
Stark and Terras \cite[\S3]{StarkTerras}
define an edge zeta function for $G$ by constructing the {\it directed
  edge matrix} of $G$.
Although they do not construct the flow graph, their directed edge matrix
is the adjacency matrix of the flow graph.
\end{remark}

\begin{proposition}
\label{p:cycles}
Let $G$ be an undirected graph, and let $\tildeG$ be as defined above.
There is a natural length-preserving bijection between cycles in
$\tildeG$ and completely admissible cycles in $G$.
\end{proposition}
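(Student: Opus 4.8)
The plan is to treat the proposition as essentially a bookkeeping consequence of the identification, recalled just before the statement, of admissible paths in $G$ with paths in $\tildeG$; the only genuine twist is that on \emph{cycles} this identification becomes length-preserving rather than length-decreasing, because a cycle carries an extra ``wrap-around'' step. The bijection will be the obvious one, defined with no choices, which is the sense in which it is natural.

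First I would decode a cycle in $\tildeG$. By definition it is a path $\epsilon_1 \circ \epsilon_2 \circ \cdots \circ \epsilon_\ell$ in $\tildeG$ with $\tilde\tau(\epsilon_\ell) = \tilde\sig(\epsilon_1)$. Writing $\epsilon_i = (e_i, f_i)$, the path condition $\tilde\tau(\epsilon_i) = \tilde\sig(\epsilon_{i+1})$ forces $f_i = e_{i+1}$ for $i = 1, \dots, \ell-1$, and the closure condition forces $f_\ell = e_1$; hence the cycle equals $(e_1,e_2) \circ (e_2,e_3) \circ \cdots \circ (e_{\ell-1},e_\ell) \circ (e_\ell,e_1)$ and is completely determined by the sequence $e_1, \dots, e_\ell$. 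I would send it to $e_1 \circ e_2 \circ \cdots \circ e_\ell$ and claim this is a completely admissible cycle of $G$. Indeed, reading all indices modulo $\ell$, the statement that each $(e_i, e_{i+1})$ is an edge of $\tildeG$ is precisely the conjunction of $\tau(e_i) = \sig(e_{i+1})$ and $\overline{e_{i+1}} \ne e_i$. The equalities for $i = 1, \dots, \ell-1$ together with the one coming from the wrap-around edge $(e_\ell, e_1)$, namely $\tau(e_\ell) = \sig(e_1)$, say exactly that $e_1 \circ \cdots \circ e_\ell$ is a cycle; the inequalities for $i = 1, \dots, \ell-1$ are admissibility, and the extra inequality $\overline{e_1} \ne e_\ell$ from the wrap-around edge is exactly the condition that upgrades ``admissible'' to ``completely admissible.''

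Conversely, a completely admissible cycle $e_1 \circ e_2 \circ \cdots \circ e_n$ of $G$ is sent to $(e_1,e_2) \circ (e_2,e_3) \circ \cdots \circ (e_{n-1},e_n) \circ (e_n,e_1)$. Each pair $(e_i, e_{i+1})$ with $i < n$ is an edge of $\tildeG$ because $\tau(e_i) = \sig(e_{i+1})$ and $\overline{e_{i+1}} \ne e_i$ (the path and admissibility conditions); the final pair $(e_n, e_1)$ is an edge of $\tildeG$ because $\tau(e_n) = \sig(e_1)$ and $\overline{e_1} \ne e_n$ (the cycle condition and complete admissibility). Consecutive edges of this sequence agree at the shared vertex by construction and its first source equals its last terminus, so it is a cycle of length $n$ in $\tildeG$. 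The two assignments are visibly mutually inverse, and both preserve length; the small cases $n = 1$ (a loop of $G$) and $n = 2$ require no separate treatment once all indices are read modulo the length. I would also remark that the map commutes with cyclic rotation of the chosen starting vertex, so that it descends to cycles considered up to rotation if that is the convention in force.

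I do not anticipate a real obstacle: the argument is a matching-up of definitions. The one place needing care --- and the only place where the strengthening from ``admissible'' to ``completely admissible'' is used --- is the wrap-around step: one must check that the closure edge $(e_\ell, e_1)$ of the $\tildeG$-cycle is genuinely an edge of $\tildeG$, that this is equivalent to $\overline{e_1} \ne e_\ell$, and hence that the corresponding cycle of $G$ has length $\ell$ and not $\ell - 1$. Keeping this off-by-one straight relative to the path correspondence is the crux.
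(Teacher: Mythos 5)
Your proposal is correct and follows the same route as the paper's proof: both directions of the bijection are the evident reindexing, with the wrap-around edge $(e_n,e_1)$ supplied by $\tau(e_n)=\sig(e_1)$ and its admissibility in $\tildeG$ corresponding exactly to the ``completely admissible'' condition $e_n\neq\bar e_1$. Your write-up merely spells out the definition-checking that the paper compresses into ``is easily seen.''
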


\begin{proof}
Let $e_1 \circ  e_2 \circ \dots \circ e_n$ be a completely admissible
cycle in $G$. 
Then since $\tau(e_n)= \sig(e_1)$, 
$(e_1,e_2) \circ  (e_2,e_3) \circ \dots \circ (e_{n-2}, e_{n-1})
\circ (e_{n-1},e_n) \circ (e_n,e_1)$ is a cycle in $\tildeG$, also of length $n$.  
Conversely, a cycle in $\tilde G$ is easily seen to give a cycle in
$G$, of the same length, which must be completely admissible.
\end{proof}

The two examples in Figures~\ref{f:triangle}~and~\ref{f:dumbell}
show that $\tildeG$ is not in general an undirected graph.  The examples
also show that $\tildeG$ need not be strongly connected, even when $G$
is.  In Figure~\ref{f:triangle}  the undirected graph
$U(\tildeG)$ derived from $\tildeG$ is not even connected.  
It is clear this phenomenon occurs when $G$ is a path or cycle. 
In Figure~\ref{f:dumbell},  $\tildeG$ is not strongly connected,
since no edge in $\tildeG$ has source~$5$, but $U(\tildeG)$ is
connected.  The following propositions present some connectivity
properties of $\tildeG$.

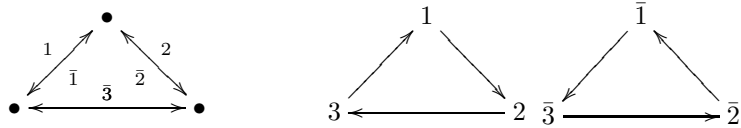
\begin{figure}
$$\xymatrix{
& \bullet \ar@{<->}[dr]^2_{\bar 2}& \\
\bullet \ar@{<->}[ur]^1_{\bar 1} \ar@{<->}[rr]^{\bar 3}^3 & &\bullet 
}
 \qquad \qquad
\xymatrix{
& 1 \ar@{->}[dr]& \\
3 \ar@{->}[ur] \ar@{<-}[rr] & & 2
}
\xymatrix{
& \bar 1 \ar@{<-}[dr]& \\
\bar 3 \ar@{<-}[ur] \ar@{->}[rr] & & \bar 2
}
$$

\caption{\label{f:triangle}
On the left, the graph $G$ is an undirected 3-cycle.  
The outer labels   are for clockwise edges, the inner labels 
  for counterclockwise edges.  On the right, $\tildeG$ consists of two
  disconnected directed $3$-cycles.}
\end{figure}

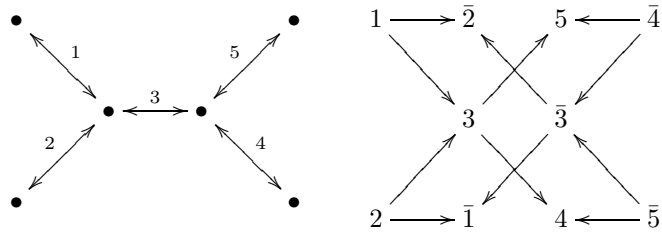
\begin{figure}
$$\xymatrix{
\bullet \ar@{<->}[dr]^1 &&& \bullet \\
& \bullet \ar@{<->}[r]^3 & \bullet \ar@{<->}[dr]^4\ar@{<->}[ur]^5 & \\
\bullet\ar@{<->}[ur]^2 && &\bullet
}
 \qquad 
\xymatrix{
1 \ar@{->}[r] \ar@{->}[dr] & \bar 2 &  5 & \bar 4 \ar@{->}[l] \ar@{->}[dl] \\
 & 3\ar@{->}[ur] \ar@{->}[dr]& \bar 3\ar@{->}[ul]\ar@{->}[dl] & \\
2\ar@{->}[r] \ar@{->}[ur] & \bar 1 & 4 &
\bar 5 \ar@{->}[l] \ar@{->}[ul]
}
$$

\caption{\label{f:dumbell}
On the left, the graph $G$ is an undirected tree.  The edges
1,2,3,4,5 go left to right, and the conjugate edges go right to left.  
On the right the flow graph, $\tildeG$ is not strongly
connected (note, {\em e.g.}, that node $1$ has no edges leading into it), although $U(\tildeG ) $ is connected.}
\end{figure}

\begin{lemma}
\label{eebar} 
Let $G$ be a connected undirected graph. 
Let $e,f$ be edges with $f \ne e$, and $f \ne \bar e$.  
In $\tildeG$, either $e$ or $\bar e$ is connected  either to $f$ or
to $\bar f$ by an admissible path.  
Consequently, $U(\tildeG)$ has at most two connected components.

If $G$ has a vertex of degree at least~3, then $U(\tildeG)$ is connected.
\end{lemma}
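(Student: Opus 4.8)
Throughout I use the correspondence between admissible paths in $G$ and paths in $\tildeG$ recorded above, and I prove the three assertions in order.

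\emph{The path statement.} Since $G$ is connected and conjugation reverses every edge, ``is connected to'' is a symmetric relation on $G$, so there is a path in $G$ from $\tau(e)$ to $\sig(f)$. Let $\mathcal F$ be the family of all paths in $G$ of the form $X\circ g_1\circ\cdots\circ g_k\circ Y$ with $k\ge 0$, $X\in\{e,\bar e\}$, and $Y\in\{f,\bar f\}$; prepending $e$ and appending $f$ to a path from $\tau(e)$ to $\sig(f)$ shows $\mathcal F\ne\nul$. Pick $P\in\mathcal F$ of minimal length; I claim $P$ is admissible. If $P$ backtracks internally ($g_{i+1}=\bar g_i$), delete the pair $g_i,g_{i+1}$; if $g_1=\bar X$, delete $g_1$ and replace $X$ by $\bar X$; if $Y=\bar g_k$, delete $g_k$ and replace $Y$ by $\bar Y$. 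Because $\bar X\in\{e,\bar e\}$ and $\bar Y\in\{f,\bar f\}$, each move yields a strictly shorter element of $\mathcal F$, contradicting minimality, so none is available and $P$ is admissible. (When $k=0$, $P=X\circ Y$ and admissibility just says $Y\ne\bar X$, which holds because $f\ne e$ and $f\ne\bar e$ force $\{e,\bar e\}$ and $\{f,\bar f\}$ to be disjoint.) The admissible path $P$ corresponds to a path in $\tildeG$ from the vertex $X$ to the vertex $Y$, proving that $e$ or $\bar e$ is connected in $\tildeG$ to $f$ or to $\bar f$.

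\emph{At most two components.} The key point is that conjugation $e\mapsto\bar e$ is an automorphism of $U(\tildeG)$: one checks that $(e,g)$ is an edge of $\tildeG$ (that is, $\tau(e)=\sig(g)$ and $\bar g\ne e$) if and only if $(\bar g,\bar e)$ is an edge of $\tildeG$, so conjugation carries edges of $\tildeG$ to reversed edges of $\tildeG$ and hence preserves adjacency in $U(\tildeG)$; being an involution, it is an automorphism. It therefore permutes the connected components of $U(\tildeG)$, and in particular interchanges the component $C_1$ containing a chosen edge $e_0$ with the component $C_2$ containing $\bar e_0$ (possibly $C_1=C_2$). Now let $f$ be any edge. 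If $f\in\{e_0,\bar e_0\}$ then $f\in C_1\cup C_2$. Otherwise the path statement produces $X\in\{e_0,\bar e_0\}$ and $Y\in\{f,\bar f\}$ lying in one component of $U(\tildeG)$, which must be $C_1$ or $C_2$ since it contains $X$; as conjugation fixes the set $\{C_1,C_2\}$, applying it in the case $Y=\bar f$ shows that $f$ itself lies in $C_1\cup C_2$. Hence every component of $U(\tildeG)$ equals $C_1$ or $C_2$, so there are at most two.

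\emph{Connectivity when a vertex has degree $\ge 3$.} By the previous paragraph it suffices to exhibit a single edge $e$ with $e$ and $\bar e$ in the same component of $U(\tildeG)$, for then choosing $e_0=e$ there forces $C_1=C_2$ and $U(\tildeG)$ is connected. Let $v$ satisfy $\deg(v)\ge 3$ and pick three distinct edges $a,b,c$ with $\tau(a)=\tau(b)=\tau(c)=v$. For any ordered pair of distinct ones, say $a$ and $b$, the pair $(a,\bar b)$ is an edge of $\tildeG$, since $\tau(a)=v=\sig(\bar b)$ and $\bar{\bar b}=b\ne a$. Thus in $U(\tildeG)$ we have $a\sim\bar b$, $c\sim\bar b$, and $a\sim\bar c$; the first two give $a\sim c$, and with the third, $c\sim a\sim\bar c$, so $c$ and $\bar c$ lie in the same component, as required. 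I expect the genuinely fiddly point to be the path statement — the shortening idea is immediate, but the endpoint cases (an empty or length-one middle segment, where the ``delete a backtracking pair'' step at one end overlaps the other end) need care — whereas the observation that conjugation is an automorphism of $U(\tildeG)$, the one load-bearing idea for the last two assertions, makes those short.
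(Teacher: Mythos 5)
Your proof is correct and follows essentially the same route as the paper's: produce a connecting path in $G$, remove backtracking while letting the endpoints flip between an edge and its conjugate, then use the conjugation symmetry of $U(\tildeG)$ for the two-component bound and a degree-$3$ vertex to merge the two components. Your minimal-length packaging of the backtracking removal and your explicit observation that conjugation is an automorphism of $U(\tildeG)$ are tidier renderings of what the paper does by direct case analysis on how the path begins and ends and by ``reversing direction'' of admissible paths, but the underlying ideas are identical.
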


\begin{proof}
Let $v=\tau(e), w=\sig(f)$. If $v=w$, then there is an edge $(e,f)$ 
in $\tildeG$ connecting $e$ and $f$, since we assume $f \ne \bar e$.
If $v \ne w$,  then, since $G$ is connected, there is a path
$P$ from $v$ to $w$, and we can eliminate any backtracking to make $P$
admissible.

If $P$ begins with $\bar e$ and ends with $\bar f$, then $\bar e$ is
connected to $\bar f$. 
If $P$ begins with $\bar e$ and does not end with $\bar f$, then
$P\circ \bar f$ is admissible and shows that $\bar e$ is connected to
$\bar f$.

If $P$ does not begin with $\bar e$, then the path $e\circ P$ is still
admissible; now proceed as in the preceding paragraph.

Notice that an admissible path from  $\bar e$ to $f$ 
yields, by reversing direction, an admissible path from $\bar f$ to
$e$.  
We have shown, therefore, that in any conjugate pair $\{f,\bar f\}$ 
either $e$ is connected to one of the edges, or one of the 
edges is connected to $e$.  Consequently,
$U(\tildeG)$ has at most two connected components, one containing the
edges connected by an admissible path to (or from) $e$, and the other those
connected by an admissible path to (or from) $\bar e$.

Suppose $G$ has a vertex $z$ of degree at least~3, and let $e,f,g$ be
three distinct edges with terminus $z$.  
Then $\tildeG$ has the subgraph
$$\xymatrix{
e \ar@{->}[r] \ar@{->}[dr] & \bar f \\
f  \ar@{->}[r] \ar@{->}[dr]&\bar g\\
g \ar@{->}[r] \ar@{->}[uur] & \bar e
}
$$
This shows that $e$ and $\bar e$ are in the same connected component
of $U(\tildeG)$.  By transitivity, $U(\tildeG)$ is connected.
\end{proof}

It is clear that $\tildeG$ is not strongly connected when $G$ has a
vertex of degree one.  The major result of this section,
Proposition~\ref{p:tildeG}, says that
when $G$ is connected, has no vertices of degree $1$, and has one
vertex of degree at least $3$, $\tildeG$ is strongly connected.

\begin{lemma}
\label{admpath} 
Let $G$ be an undirected graph such that
  $G$ is connected, every vertex has degree $\ge 2$, and some vertex
  has degree $\ge 3$. Then  
\begin{enumerate}
\item Every  cycle on $G$ contains a vertex of degree $\ge 3$.
\item Every  edge on $G$ is connected via an admissible
  path, possibly empty,  to an admissible cycle in such a way that neither the path
  nor its conjugate has an edge in common with the cycle.
\end{enumerate}
\end{lemma}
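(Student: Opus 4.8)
The plan is to prove the two parts in order, using the first as a tool for the second. For part (1), I would argue by contrapositive: suppose $C = e_1 \circ \cdots \circ e_n$ is a cycle all of whose vertices have degree exactly $2$. At each vertex $v$ of $C$, exactly two edge-ends meet, and both must be edge-ends of $C$ itself; in an undirected graph this forces $C$ (together with its conjugate) to be a connected component of $G$, so $G$ is a single cycle. But then no vertex of $G$ has degree $\ge 3$, contradicting the hypothesis. So every cycle must pass through a vertex of degree $\ge 3$. I should be slightly careful that ``cycle'' here allows repeated vertices in principle, but a cycle through only degree-$2$ vertices cannot repeat a vertex, so the argument goes through.

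For part (2), fix an edge $e$ with, say, $\tau(e) = v$. Since every vertex has degree $\ge 2$, there is an edge $f \ne \bar e$ with $\sig(f) = v$, and we can keep extending: from any vertex we just reached along an edge $g$, there is always an outgoing edge $\ne \bar g$ because the degree is $\ge 2$. Thus starting from $e$ we can build an arbitrarily long admissible walk. Since $E$ is finite, such a walk must eventually revisit a vertex, and the first time it does so we extract an admissible closed subwalk $C$. The point at which the walk first returns to a previously-visited vertex $u$ gives a cycle $C$ that is admissible as a path; I then need to check it is \emph{completely} admissible, i.e.\ that the last edge of $C$ is not the conjugate of the first — this holds because at $u$ the walk arrived along the last edge of $C$ and departed (the first time) along the first edge of $C$, and consecutive edges in our walk are never conjugate by construction.

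The remaining work — and the main obstacle — is the ``in such a way that neither the path nor its conjugate has an edge in common with the cycle'' clause. The prefix $P$ of the walk leading from $e$ up to the first repeated vertex $u$ is a \emph{simple} admissible path (no repeated vertices before $u$), and $C$ is the closed portion from $u$ back to $u$; by construction $P$ and $C$ share no edges, and since $P$ is simple and $C$ touches $u$ only at its endpoints, $\bar P$ shares no edge with $C$ either (an edge of $\bar P$ lies between two vertices of $P$, and the only vertex $P$ and $C$ have in common is $u$, an endpoint of $C$). Here I should be careful about the edge case where $e$ itself already closes up quickly — if $\tau(e)$ is revisited immediately we may need the possibly-empty path allowed by the statement. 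Finally, if $v$ already has degree $\ge 3$ I can often take $P$ empty and build the cycle starting and ending at $v$ directly, but the uniform construction above already covers this. Part (1) is not strictly needed for part (2) as I've set it up, but it reassures us that the cycle $C$ produced can be taken to meet a degree-$\ge 3$ vertex if desired, which is how Proposition~\ref{p:tildeG} will want to use it.
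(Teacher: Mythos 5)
Your proof is correct and follows essentially the same route as the paper's: the same contradiction argument for part (1), and for part (2) the same greedy admissible walk extended until the first repeated vertex, with the prefix/cycle decomposition and the observation that the prefix meets the cycle in only one vertex giving the edge-disjointness. The only divergence is your extra verification that the extracted cycle is \emph{completely} admissible, which the lemma does not ask for (and whose justification is slightly loose, since the first and last edges of $C$ are not consecutive in the walk), but this does not affect the validity of the proof.
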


\begin{proof}
Suppose $C$ is a cycle on $G$ all of whose vertices have degree
$2$. $C$ is clearly a connected component of $G$, therefore all of $G$
itself since $G$ is connected; this contradicts the assumption that
$G$ has a vertex of degree $\ge 3$. This establishes the first item in
the theorem.

Let $e$ be a given edge; follow an admissible path from $e$, choosing
edges arbitrarily. Since every vertex of $G$ has degree $\ge 2$, the
path can be extended admissibly at every step until it reaches for the
first time a vertex $w$ already visited. The path from $w$ to $w$ is
necessarily an admissible cycle.  Edges in this cycle, $C$,
share at most one vertex with edges of the path $P$ from $\sig(e)$ to $w$.
Thus there is no edge common to $C$ and either $P$ or $\bar P$.
\end{proof}

\begin{lemma}
\label{l:oppconn}
Let $G$ be as in Lemma~\ref{admpath}. Then every  edge on
$G$ is connected via an admissible path to its opposite. 
\end{lemma}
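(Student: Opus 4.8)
The plan is to feed $e$ into Lemma~\ref{admpath}(2), which gives an admissible path $P$ from $e$ to an admissible cycle $C$ with $P$ and $\bar P$ having no edge in common with $C$; a look at that construction shows $C$ may be taken \emph{completely} admissible (it arises from a non-backtracking walk whose vertices are distinct, so there is no backtrack even across the join). I then split into two cases according to whether $P$ is empty.

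If $P$ is nonempty, write $P = e\circ p_2\circ\cdots\circ p_a$ and $C = c_1\circ\cdots\circ c_m$ based at $w=\tau(p_a)=\sig(c_1)$. I claim $P\circ C\circ\bar P$ is an admissible path from $e$ to $\bar e$. It plainly begins with $e$ and ends with $\bar e$; the seam $p_a\circ c_1$ is non-backtracking because $P$ and $C$ are consecutive pieces of one non-backtracking walk, the seam $c_m\circ\bar p_a$ is non-backtracking because $p_a\ne c_m$ (as $P$ and $C$ are edge-disjoint), and every seam inside $\bar P$ is non-backtracking since $\bar P$ is the reverse of the non-backtracking $P$. (The same computation, applied to any edge that has a \emph{nonempty} Lemma~\ref{admpath}(2)-path, shows that edge is connected to its opposite; I use this remark again below.)

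If $P$ is empty, then $e$ is the first edge of the completely admissible cycle $C$. By Lemma~\ref{admpath}(1), $C$ has a vertex $z$ of degree $\ge 3$; let $a$ be the edge of $C$ entering $z$. I reduce to producing a \emph{turn-around at $z$}, meaning an admissible closed walk $R$ at $z$ whose first edge is $\ne\bar a$ and whose last edge is $\ne a$. Given such an $R$, the concatenation of (the arc of $C$ from $e$ to $a$) $\circ\,R\,\circ$ (the reverse of that arc) is an admissible path from $e$ to $\bar e$, the two new seams being non-backtracking by exactly the two conditions imposed on $R$ (when $a=e$ this degenerates to $e\circ R\circ\bar e$).

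The heart of the argument is building $R$, and here I expect the main obstacle. Traversing $C$ once from $z$ gives an admissible closed walk $b\circ\cdots\circ a$ at $z$ whose first edge $b$ is $\ne\bar a$ (consecutive edges of a completely admissible cycle are never conjugate) but whose last edge is $a$, so $C$ alone is not enough; this is where $\deg z\ge 3$ must be used. Pick a third edge $h$ entering $z$ with $h\ne a$ and $h\ne\bar b$, and apply Lemma~\ref{admpath}(2) to $\bar h$, getting an admissible path $Q$ from $\bar h$ to a completely admissible cycle $D$. If $Q$ is nonempty, the remark above gives an admissible walk from $\bar h$ to $h$, which is a closed walk at $z$ with first edge $\bar h\ne\bar a$ and last edge $h\ne a$: this is $R$. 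If $Q$ is empty, then $\bar h$ is the first edge of $D$ and $D$ is based at $z$; if $D$ enters $z$ by an edge $\ne a$, traversing $D$ once from $z$ already is an $R$. The remaining possibility—$D$, like $C$, enters $z$ via $a$—is the crux: now $C$ and $D$ are two completely admissible cycles through $z$ that share a nonempty terminal arc into $z$ but leave $z$ by the different edges $b$ and $\bar h$ (different since $h\ne\bar b$). Reading both cycles backward from $z$, let $v^\ast$ be the first vertex where they diverge; at $v^\ast$ three distinct edges come in (the two divergent edges together with the reverse of the first shared edge), so $\deg v^\ast\ge 3$, and $v^\ast\ne z$ because a proper terminal segment of a simple cycle is a simple path. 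Then $R$ = (arc of $C$ from $z$ to $v^\ast$) $\circ\,\overline{(\text{arc of }D\text{ from }z\text{ to }v^\ast)}$ is an admissible closed walk at $z$—its one interior seam, at $v^\ast$, is non-backtracking because the two divergent edges are distinct—beginning with $b\ne\bar a$ and ending with $h\ne a$, as required. Throughout I would dispose of degenerate configurations (loops at $z$, cycles of length $1$, or $\sig(a)=z$) by hand; the work lies in this last case and in checking carefully that $v^\ast$ really has degree $\ge 3$ and is distinct from $z$.
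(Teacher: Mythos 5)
Your proof is correct and shares most of its skeleton with the paper's: the nonempty-$P$ case is handled identically by $P\circ C\circ\bar P$, and in the empty case both arguments walk along $C$ to a degree-$3$ vertex, branch off along a third edge (your $\bar h$ is the paper's $f$), apply Lemma~\ref{admpath}(2) to it, and treat the sub-case of a nonempty connecting path $Q$ in exactly the same way, producing $S\circ Q\circ D\circ\bar Q\circ\bar S$. The two arguments genuinely part ways only in the final sub-case, where $\bar h$ lies on its cycle $D$. There the paper follows $D$ only as far as the \emph{first} vertex $z$ at which it re-meets $C$; since the edge of $D$ entering that vertex has its source off $C$, one may turn onto $\bar C$ there and ride it all the way back to $\sig(e)$, so the return trip uses the far side of $C$ and no further cases arise. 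You instead insist on returning by retracing the outbound arc, which forces you to build a closed ``turn-around'' walk at the degree-$3$ vertex, and when $D$ re-enters that vertex by the same edge $a$ as $C$ you must compare $C$ and $D$ backward to their first divergence point $v^\ast$ and splice the arcs there. That analysis does go through---maximality of the common terminal arc gives the non-backtracking seam at $v^\ast$, and simplicity of the cycles together with $h\neq\bar b$ rules out $v^\ast=z$---but it costs an extra layer of sub-cases plus the degenerate configurations you defer, all of which evaporate under the paper's ``stop at the first intersection with $C$'' device. On the credit side, your version makes every seam check explicit, and you note and justify that the cycle produced by Lemma~\ref{admpath}(2) is completely admissible, a fact the paper uses silently (e.g.\ for the seam $a\circ b$ when the degree-$3$ vertex is the basepoint of $C$).
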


\begin{proof}
Let $e$ be a given edge. By Lemma~\ref{admpath}, $e$ is connected via
an admissible path $P$ to an admissible cycle $C$.
If $P$ is nonempty, then $P\circ C\circ \bar P$ connects $e$ to $\bar
e$ admissibly. 

If $P$ is empty, then $e$ lies on $C$.  Starting at $v=\sigma(e)$, follow $C$
  along a nonempty path $S$ (possibly equal to $C$ itself) to a vertex
  $w$ of degree $\ge 3$. Then $w$ is the source of an edge $f$ not on
  $C$ or $\bar C$.  By  Lemma~\ref{admpath}, $f$ is connected by a path $Q$ 
 to a simple  cycle $D$. 
\begin{itemize}
\item {Case 1:}   Suppose $Q$ is nonempty, so  $f$ is not part of $D$.
Then $Q\circ D\circ \bar Q $ connects $f$ to $\bar f$ and 
$S\circ Q \circ D \circ  {\bar Q} \circ \bar S$ is an admissible path
from $e$ to $\bar e$.  
\item {Case 2:} 
Suppose $Q$ is empty, so $f$ lies on $D$.  
Let $z$ be the first vertex (beyond the initial $w$)
lying on $D\cap C$, and let $T$ be the corresponding path from $w$ to
$z$ ($z$ may equal $w$). Note that $S \circ T$ is admissible, since the
first edge of $T$ is $f$ which  does not lie on $ C$. 
Now let $U$ be the segment
of $\bar C$ from $z$ back to $v$. Since the final edge of $T$ is not
on $C$, $T\circ U$ is admissible, and therefore so is $S\circ T \circ
U$, connecting $e$ to $\bar e$.  
\end{itemize}
\end{proof}

\begin{examples}
The following graphs illustrate the two cases in the above proof. 
First consider the  undirected graph $G$ derived by adding conjugate
edges to  the following graph:

$$\xymatrix{
\bullet \ar@{->}[dr]^1 & & & \bullet \ar@{->}[dd]^6\\
& \bullet\ar@{->}[r]^4 \ar@{->}[dl]^2 & \bullet\ar@{->}[ur]^5 & \\
\bullet\ar@{->}[uu]^3 & & & \bullet\ar@{->}[ul]^7
}
$$
Note that edge $1$ is connected to edge $\bar 1$ via the edge sequence
$1 \circ 4\circ \ubm{5 \circ 6 \circ 7}{D} \circ \bar 4 \circ \bar 1$.
This is the path constructed in Case~1 of the above proof, 
in which $f$ is edge $4$,  path $S$ is edge~1, and path $T$ is edge~4. 

For Case~$2$,  which $f$ does lie on the cycle, $D$, consider the undirected graph derived from 
\[\xymatrix{
\bullet \ar@{->}[r]^1 \ar@{->}[d]^3  & \bullet \ar@{->}[r]^2 \ar@{->}[d]^4 & \bullet \ar@{->}[d]^5\\
\bullet \ar@{->}[r]^6  & \bullet \ar@{->}[r]^7 & \bullet 
}
\]
To connect edge $3$ to edge $\bar 3$, view edge $3$ as lying on the
cycle $C: 3\circ 6\circ \bar 4 \circ \bar 1$, and follow $\ubm{3\circ
  6}{S}\circ \ubm{7 \circ \bar 5 \circ \bar 2}{T}\circ \ubm{\bar 4
  \circ \bar 6 \circ \bar 3}{U}$. Here of course $f=7$ lies on $D:
7\circ\bar 5\circ \bar 2\circ \bar 1 \circ 3 \circ 6$. Note that it
was important to rejoin the cycle $C$ at the first opportunity, namely at
edge $4$; this is the only juncture that allows us to reverse
directions on $C$. 
\end{examples}

\begin{proposition}
\label{p:tildeG}
Let $G$ be an undirected graph such that
  $G$ is connected, every vertex has degree $\ge 2$, and some vertex
  has degree $\ge 3$.   
Then for any two edges $e,f$ in $G$, $e$ is connected to
$f$ via an admissible path. Consequently, $\tildeG$ is strongly
connected. 

\end{proposition}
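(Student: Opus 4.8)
The plan is to reduce the statement about arbitrary pairs $e,f$ to the special case $f \in \{e, \bar e\}$, which is handled by Lemma~\ref{l:oppconn}. The key observation is that Lemma~\ref{eebar} already tells us that in $U(\tildeG)$ every edge is connected to $e$ or to $\bar e$, and Lemma~\ref{l:oppconn} upgrades this by supplying an \emph{admissible} (directed) path from $e$ to $\bar e$. Combining these should force $\tildeG$ itself (not just $U(\tildeG)$) to be strongly connected.

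First I would fix $e$ and establish that every edge $g$ of $G$ is connected \emph{to} $e$ by an admissible path, and also \emph{from} $e$ by an admissible path. By Lemma~\ref{admpath}, follow an admissible path from $\sig(g)$ backward — or more cleanly, apply Lemma~\ref{eebar} to the pair $\{g,e\}$: it yields an admissible path between one of $g,\bar g$ and one of $e,\bar e$. Whichever of the four cases occurs, I can prepend or append the connection $e \to \bar e$ (or $\bar e \to e$, obtained by conjugating) furnished by Lemma~\ref{l:oppconn}, and use the fact that reversing an admissible path between conjugates gives an admissible path between the conjugates of its endpoints, to conclude that $e$ is connected to $g$. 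I would need to check the backtracking condition at the splice point: when concatenating $P_1 \circ P_2$ where $P_1$ ends at a vertex $v$ and $P_2$ starts there, admissibility can fail only if the last edge of $P_1$ is the conjugate of the first edge of $P_2$; since the paths from Lemma~\ref{l:oppconn} pass through a vertex of degree $\ge 3$, there is always a third edge available to route around any such obstruction, exactly as in the proof of that lemma.

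Next, for arbitrary $e,f$, I would compose: $e$ is connected to $f$ by going $e \leadsto \bar e \leadsto f$ if necessary, or directly, again resolving any backtracking at the juncture using a degree-$\ge 3$ vertex (every admissible path here can be arranged to visit one, since by Lemma~\ref{admpath}(1) any cycle contains such a vertex and the paths we build all traverse a cycle). Finally, I would translate ``every $e$ is connected to every $f$ by an admissible path in $G$'' into ``$\tildeG$ is strongly connected'' via the natural identification of admissible paths in $G$ with paths in $\tildeG$ noted just before Proposition~\ref{p:cycles}: a vertex of $\tildeG$ is an edge of $G$, and an admissible path $e \circ \cdots \circ f$ of length $\ge 2$ in $G$ is precisely a directed path from vertex $e$ to vertex $f$ in $\tildeG$; the length-one and length-zero edge cases correspond to $f = e$ or an edge $(e,f)$ already present, which we handle separately.

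The main obstacle I anticipate is purely bookkeeping: ensuring admissibility at every concatenation point. The substance is already in Lemmas~\ref{eebar}, \ref{admpath}, and~\ref{l:oppconn}; what remains is to verify that the ``route around a backtrack using a third edge at a degree-$\ge 3$ vertex'' trick applies uniformly, and that the conjugate-reversal symmetry (an admissible $a \leadsto b$ path yields an admissible $\bar b \leadsto \bar a$ path) is invoked correctly in each of the finitely many cases coming out of Lemma~\ref{eebar}. I would present this as a short case analysis mirroring the structure of the proof of Lemma~\ref{eebar}, then close with the one-line transfer to $\tildeG$.
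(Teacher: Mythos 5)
Your proposal is correct and takes essentially the same route as the paper: apply Lemma~\ref{eebar} to obtain an admissible path between one of $e,\bar e$ and one of $f,\bar f$, then splice in the $e\leadsto\bar e$ and $\bar f\leadsto f$ paths supplied by Lemma~\ref{l:oppconn}, finishing with the identification of admissible paths in $G$ with directed paths in $\tildeG$. The one refinement you miss is that no ``routing around a degree-$\ge 3$ vertex'' is needed at the junctions: the path from Lemma~\ref{l:oppconn} begins with the edge $\bar f$ itself (resp.\ ends with $\bar e$), so after dropping that shared edge the concatenation is automatically admissible by the admissibility of the spliced path.
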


\begin{proof}
By Lemma~\ref{eebar}, there is an admissible path $P$ connecting
either $e$ or $\bar e$ either to $f$ or to $\bar f$. If $P$ connects $e$
to $f$, there is nothing to prove.  

If $P$ connects $e$ to $\bar f$, use Lemma~\ref{l:oppconn} to connect $\bar f$ admissibly to $f$ via the admissible path $Q=\bar f\circ Q'$. Now the concatenation $P\circ Q'$ is admissible, since the first edge of $Q'$ cannot be $f$ and so is not the reverse of the last edge $\bar f$ of $P$, and $P\circ Q'$ connects $e$ to $f$. 

If $P$ connects $\bar e$ to $f$, use Lemma~\ref{l:oppconn} to connect $e$ to $\bar e$ by an admissible path $R=R'\circ \bar e$; as in the previous paragraph, $R'\circ P$ is admissible and connects $e$ to $f$.

If $P$ connects $\bar e$ to $\bar f$, use $R'\circ P \circ Q'$ as constructed in the preceding paragraphs.

\end{proof}

\section{The Sum-Product Algorithm with Checks of Degree $2$}
\label{s:SPA}
Throughout this section and the next, let $B=(E,L,R,\lam,\rho)$ be a
bipartite graph.
In this section we show how the sum-product algorithm simplifies when
all check nodes have degree~$2$.

We express all of the probabilistic data  in the sum-product 
algorithm using the {\it odds ratio}, which for a distribution $p$ on
$\{0,1\}$ is $p(1)/p(0)$.  
The  input to the sum-product algorithm  is then
$u_\ell=p_\ell(1)/p_\ell(0)$ where $p_\ell$ expresses the likelihood,
given some received signal for bit $\ell$, that this bit's value is
either $1$ or $0$.
Likewise, the messages along the edges of the graph produced by  the
algorithm are expressed as the odds of $1$.  
 We define the {\it parity} of $u \in (0,\infty)$ to be 0 if $u<1$,
$\infty$ if $u>1$, and undefined when $u=1$.
The sum-product algorithm uses  the transform from the ``odds of $1$''
domain to the ``difference domain'' in which a probability distribution $p$ is
represented using $p(0)-p(1) $, which is in the interval $ [-1,+1]$. 
The function  $s:\R \cup \set{\infty} \Rta \R \cup \set{\infty}$  defined by 
$s(x) = \frac{1-x}{1+x}$ transforms from one domain to the other.
Notice that $s(s(x))=x$.

\begin{algorithm}[Sum-Product Algorithm]\hspace*{1in}

\renewcommand{\descriptionlabel}[1]%
     {\hspace{\labelsep}\textsc{#1}}
\begin{description}
\item [Input:]  For each $\ell \in L$,  $u_{\ell} \in (0,\infty)$. 
Termination criterion $\eps>0$.
\item [Data Structures:] For each $e \in E$, $x_e, y_e \in
(0,\infty)$. 
\item [Initialization:] Set $y_e \lta 1$ for all $e \in E$. \\
\item [Algorithm:] \hspace{1in}
 \begin{description}
\item  [Bit-To-Check Step:] For each $e \in E$, set
$$x_e \lta u_{\lam(e)} \prod_{\substack{f: \lam(f) = \lam(e)\\ f \not=e}} y_f $$
\item [Check-To-Bit Step:] For each $e \in E$, set
$$y_e \lta  s\left( \prod_{\substack{f: \rho(f) = \rho(e)\\f \not= e}}  s(x_f)\right) $$
\item [New Estimate Step:] Set
$$\hat{u}_{\ell} \lta  u_{\ell} \prod_{e \in \lam^{-1}(l)} y_e $$
\end{description}
\item [Termination and Output:] 
If either  $\hat{u}_\ell < \eps$ or  $\hat{u}_\ell > 1/\eps$ for all $\ell
\in L$ then return the binary vector based on the parity of $
\hat{u}_\ell$: Vector $w \in \F^L$ such that   
\[w_\ell = \begin{cases}
1 & \text{ if } \hat{u}_\ell > 1 \\
0 & \text{ else}
\end{cases}
\]
\end{description}
\end{algorithm}

There are a variety of reasonable criteria for termination; we have
simply chosen one.  
We are interested in finding conditions on the set of $u_\ell$ that will determine the convergence behavior of the set of $\hat{u}_\ell$.

When analyzing the algorithm it will sometimes prove useful to
indicate the iteration using a superscript as follows.
We initialize $y^{(0)}=1$ and for $t \geq 1$, 
\[
x_e^{(t)} \lta u_{\lam(e)} \prod_{\substack{f: \lam(f) = \lam(e)\\ f \not=e}}
y_f^{(t-1)}  \qquad \text{and} \qquad
y_e^{(t)} \lta  s\left( 
\prod_{\substack{f: \rho(f) = \rho(e)\\f \not= e}}  s(x_f^{(t)})\right) 
\]

As we have defined it, a bipartite graph is {\em directed};  all edges
go from bit nodes to check nodes.  One could also describe the
algorithm using the undirected graph $U(B)$ discussed in the previous
section. The messages $y_e$ may then be
considered as attached to the reverse arrow $\bar e$.  
The notation proved to be simpler using the directed bipartite graph
since we will define a conjugation map on $E$ itself when all check
nodes have degree~$2$.

We now restrict attention to bipartite graphs in which each check node
has degree~$2$.  We  also assume that the graph is connected,
since the SPA treats each component independently.
One may readily check that the code defined by such a graph is a
repetition code.  The sum-product algorithm simplifies dramatically
because at the check to bit step there is only one term in the
product.

\begin{proposition}
\label{p:monomial}
If all check nodes have degree~$2$, then  then all edge messages are
monomials in the $u_{\ell}$.
Furthermore, for each edge $e$, at any iteration $t$, 
$y_e^{(t)} = x_{\bar e}^{(t)}$ where $\bar e$ is the unique edge sharing a check node
with $e$.
\end{proposition}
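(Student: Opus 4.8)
The plan is to establish the concrete identity $y_e^{(t)} = x_{\bar e}^{(t)}$ first, and then to deduce the monomiality statement from it by a short induction on the iteration count $t$.

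First I would unwind the Check-To-Bit Step at a fixed edge $e$. Writing $r = \rho(e)$, the hypothesis that $r$ has degree $2$ means exactly that $\rho^{-1}(r) = \{e,\bar e\}$, by the very definition of $\bar e$ as the unique edge sharing a check node with $e$. Hence the index set $\{f : \rho(f) = \rho(e),\ f \ne e\}$ appearing in the product is the singleton $\{\bar e\}$, the product collapses to the single factor $s(x_{\bar e}^{(t)})$, and therefore $y_e^{(t)} = s\bigl(s(x_{\bar e}^{(t)})\bigr) = x_{\bar e}^{(t)}$ by the involution property $s\circ s = \mathrm{id}$ recorded just before the algorithm. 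This is the entire content of the second assertion; it uses nothing beyond the degree-$2$ hypothesis and the fact that $s$ is an involution.

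Next I would prove by induction on $t$ that every $x_e^{(t)}$ and every $y_e^{(t)}$ is a monomial in the variables $\{u_\ell\}_{\ell\in L}$. The base case is the initialization $y_e^{(0)} = 1$, the empty monomial. For the inductive step, assume each $y_f^{(t-1)}$ is a monomial; then the Bit-To-Check formula exhibits $x_e^{(t)} = u_{\lam(e)} \prod_{f} y_f^{(t-1)}$, the product taken over $f$ with $\lam(f) = \lam(e)$ and $f \ne e$, as a finite product of monomials, hence a monomial. By the identity of the previous paragraph, $y_e^{(t)} = x_{\bar e}^{(t)}$, so it too is a monomial. This closes the induction and simultaneously shows the $x_e^{(t)}$ are monomials.

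I do not expect a genuine obstacle here. The one-line reduction of the Check-To-Bit product to a single factor, combined with $s\circ s = \mathrm{id}$, is the crux; everything else is bookkeeping — keeping the order of the two update steps straight so that the inductive hypothesis on $y^{(t-1)}$ is available when forming $x^{(t)}$, and remembering to count the constant $1$ as a (trivial) monomial so that the base case goes through.
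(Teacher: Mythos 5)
Your proposal is correct and follows essentially the same route as the paper's own proof: both reduce the check-to-bit product to the single factor $s(x_{\bar e}^{(t)})$, invoke $s\circ s=\mathrm{id}$ to get $y_e^{(t)}=x_{\bar e}^{(t)}$, and run an induction on $t$ starting from $y_e^{(0)}=1$. The only difference is presentational — you isolate the identity before the induction, whereas the paper interleaves the two — so there is nothing substantive to add.
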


\begin{proof}
Clearly, at initialization $y_e^{(0)}=1$ is a monomial, as claimed.
Moreover, if all $y_e^{(t)}$ are monomials, then all $x_e^{(t+1)}$ are monomials as well,
since the bit-to-check step just involves multiplication.
Each right node has degree~$2$, so the product in the check-to-bit
step  has only one term.  Since $s^2$ is the identity, 
$y_e^{(t+1)}=x_{\bar e}^{(t+1)}$ where $e$ is the unique edge distinct from $e$
sharing the same right node. Thus we may establish the proposition
by induction.
\end{proof}

It is now evident that, when all check nodes have degree $2$, the
check nodes play no significant role in the algorithm, and that
analysis of the algorithm is simplified by keeping track only of the
exponents for the monomials $x_e^{(t)}$.
We can express the SPA using the  undirected graph $G$  associated to
$(E,L,R,\lam,\rho)$ ({\em cf.} \S 2).
Let us use $\Ba_e \in \N^L$ to denote the row vector of exponents
appearing in $x_e$, so  
$x_e = \prod_{\ell \in L} u_{\ell}^{a_{e,\ell}}$. 
We will abbreviate this product as  $ \Bu^{\Ba_e}$.
When we want to specify the $t^{\rm th}$ iteration we will write $\Ba_e^{(t)}$.
Let $\Bzero \in \N^L$ be the all-$0$ row vector and let $\Bdel_\ell \in
\N^L$ be the  
row vector which is $1$ in the $\ell^{\rm th}$ component and 0 otherwise.
The update in the SPA is 
$x_e \lta u_{\lam(e)} \prod_{\substack{f: \lam(f) = \lam(e)\\
    {f}\not=e}} x_{\bar f}$, or in terms of $\Bu$ and $\Ba_e$, 
\[
x_e = \Bu^{\Ba_e} \lta u_{\lam(e)} 
\prod_{\substack{f: \lam(\bar f) = \lam(e)\\ {\bar f }\not=e}} \Bu^{\Ba_f}
\]
Keeping track of the exponents gives us the following algorithm.

\begin{algorithm}[Local Sum Algorithm]\hspace*{1in}

\renewcommand{\descriptionlabel}[1]%
     {\hspace{\labelsep}\textsc{#1}}

\begin{description}
\item [Data Structures:] For each $e \in E$, $\Ba_e,  \in \N^L$. 
\item [Initialization:] Set $\Ba_e \lta \Bzero $ for all $e \in E$. \\
\item [Algorithm:] Set \hspace{1in} 
\begin{align}
\Ba_e &\lta \Bdel_{\lam(e)} + \sum_{\substack{f: \lam(\barf) =
    \lam(e)\\ \barf \not=e}} \Ba_{f}  
\label{e:local sum}
\end{align}
\end{description}
\end{algorithm}

Let $\BA$ be the $\abs{E} \times \abs{L}$ matrix whose $e^{\rm th}$ row is
$\Ba_e$.  Let  $\BLam$ be the $\abs{E} \times \abs{L}$ matrix and let
$\BK$ be the $\abs{E}\times \abs{E}$ matrix defined by
\[
\BLam_{e,l} = 
\begin{cases}
1 & \text{  when } \lam(e)=l \\
 0& \text{ else}
\end{cases}
\qquad 
\BK_{e,f} = \begin{cases}
1 & \text{ when }\lam(\barf) = \lam(e) \text{  and }\barf \ne e \\
 0& \text{ else}
\end{cases}
\]
The $e^{\rm th}$ row of $\BLam$ is $\Bdel_{\lam(e)}$ and 
one can check that  $\BK$ is the adjacency matrix of the graph
$\tildeG$.

The local sum algorithm is then 
\begin{align*}
 \BA^{(0)} &= \Bzero \\
\BA^{(t)} &= \BLam + \BK \BA^{(t-1)}
\end{align*}
The equation above is easily solved, for $t\geq 1$, 
\[ \BA^{(t)} =  \Big(\BK^{t-1} + \BK^{t-2} +  \dots + \BK + \BI \Big) \BLam
\]

\section{Convergence of the SPA}
\label{s:convergence}

Throughout this section we adopt the following notation.
Let $B=(E, L, R, \lam, \rho)$   be a bipartite graph,
We assume that $U(B)$ is connected and that 
\begin{itemize}
\item all check nodes have degree 2,
\item all bit nodes have degree at least 2,
\item some bit node has degree at least 3.
\end{itemize}
Let $G$ be the associated undirected graph as constructed in
Section~\ref{s:graphs} and let $\tildeG$ be the flow graph of $G$.  
Let $\BK$ be the adjacency matrix of $\tildeG$, which is the matrix
of the local sum algorithm.  

A vector or matrix is said to be {\em nonnegative}
if each of its entries is nonnegative, so $K$ is nonnegative.
Similarly, a vector or matrix is {\em positive} when each entry is
positive. 
We first treat the case in which $\BK$ is {\em primitive}, that is,
$\BK^r$ is positive for some positive integer $r$.
By the Perron-Frobenius theorem \cite[8.2,5]{Horn},
\cite[Ch. 1]{Minc} 
$\BK$ has a real positive eigenvalue $\rho$ satisfying the following.
\begin{itemize}
\item $\rho$ is  between the maximum row sum and the minimum
  row sum of   $\BK$, and strictly  between the two if the maximum and minimum
  are not equal.  
\item $\rho$ has algebraic multiplicity $1$.
\item $\rho$ is strictly larger in modulus than all other eigenvalues
  of $\BK$.
\item The eigenvectors  associated to $\rho$ are strictly positive.
\item Let $\Bz$ be a right eigenvector and $\Bystar$  a left eigenvector
  associated to $\rho$, normalized so that $\Bystar \Bz=1$. 
Then \cite[8.2.7]{Horn} $\BK^i = \rho^i \Bz\Bystar +(\BK-\rho \Bz\Bystar)^i$
and the eigenvalues of $(\BK-\rho \Bz\Bystar)^i$ have modulus less than
$\rho$.
\end{itemize} 
The eigenvectors $\Bystar$ and $\Bz$ are called {\em Perron vectors} of
$\BK$. 
As a consequence of the final point, for large powers of $i$,
$\BK^i$ is approximated by $\rho^i \Bz\Bystar$.

\begin{theorem}
\label{t:primitive}
With the notation above, let $\Bc = \Bystar \BLam$.  The sum-product
algorithm on $B$  converges based on the parity of $\Bu^{\Bc}$.
That is, the algorithm converges to $0$ when $\Bu^{\Bc} <1$ and to
$\infty$ when $\Bu^{\Bc} >1$.
\end{theorem}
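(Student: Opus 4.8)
The plan is to track the exponent vectors $\Ba_\ell$ of the new estimates $\hat u_\ell$ and show that, up to a positive scalar that does not affect parity, they all converge to $\Bc$. Recall from Section~\ref{s:SPA} that $\hat u_\ell = u_\ell \prod_{e \in \lam^{-1}(\ell)} y_e = u_\ell \prod_{e\in\lam^{-1}(\ell)} x_{\bar e}$, so after $t$ iterations the exponent vector of $\hat u_\ell^{(t)}$ is $\Bdel_\ell + \sum_{e\in\lam^{-1}(\ell)} \Ba_{\bar e}^{(t)}$; summarizing over all $\ell$, the matrix of these exponent vectors is $\BLam + \BK' \BA^{(t)}$ for a suitable $0$--$1$ matrix $\BK'$ recording the ``$\bar e \in \lam^{-1}(\ell)$'' incidences (in fact $\BK'$ relates to $\BK$ and one checks the row of $\hat u_\ell$ is $\Bdel_\ell$ plus the appropriate rows of $\BK\BA^{(t-1)} + \BLam$; the precise bookkeeping is routine). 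The key point is that each such exponent vector is a nonnegative combination of $\Bdel_\ell$ and the rows of $\BA^{(t)} = \bigl(\BK^{t-1}+\cdots+\BK+\BI\bigr)\BLam$, so it suffices to understand the asymptotics of $\BA^{(t)}$.

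Next I would invoke the Perron--Frobenius facts listed before the theorem. Since $\BK$ is primitive with Perron value $\rho$ and Perron vectors $\Bz$ (right) and $\Bystar$ (left), normalized so $\Bystar\Bz = 1$, we have $\BK^i = \rho^i\Bz\Bystar + (\BK - \rho\Bz\Bystar)^i$ with the second term's eigenvalues of modulus $<\rho$. Here $\rho > 1$: the row sums of $\BK$ count, for each directed edge $e$, the number of non-backtracking continuations, i.e.\ $\deg(\tau(e)) - 1 \ge 1$, and since some bit node has degree $\ge 3$ some row sum is $\ge 2$, so by the first Perron--Frobenius bullet $\rho$ is strictly between $1$ and that maximum, hence $\rho>1$. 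Therefore
\[
\frac{1}{\rho^{t-1}}\BA^{(t)} = \frac{1}{\rho^{t-1}}\sum_{i=0}^{t-1}\BK^i\BLam
\;\longrightarrow\; \Bz\Bystar\BLam \cdot \sum_{i\ge 0}\rho^{-i} \;=\; \frac{\rho}{\rho-1}\,\Bz\,(\Bystar\BLam) \;=\; \frac{\rho}{\rho-1}\,\Bz\,\Bc,
\]
where the lower-order powers $\BK^i$ for $i$ bounded and the transient part $(\BK-\rho\Bz\Bystar)^i$ contribute only $O(1)$ plus a geometrically-convergent-with-ratio-$<1$ tail after dividing by $\rho^{t-1}$. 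Thus the $e$-th row of $\rho^{-(t-1)}\BA^{(t)}$ tends to $\frac{\rho}{\rho-1} z_e \Bc$, a \emph{positive} multiple of $\Bc$ since $\Bz$ is strictly positive.

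Finally I would translate this back to $\hat u_\ell$. The exponent vector of $\hat u_\ell^{(t)}$, divided by $\rho^{t-1}$, is $\rho^{-(t-1)}\Bdel_\ell$ (which $\to 0$) plus a fixed nonnegative combination of rows of $\rho^{-(t-1)}\BA^{(t)}$, hence converges to $\gamma_\ell\,\Bc$ for some constant $\gamma_\ell > 0$ (positivity again from $\Bz > 0$ together with the fact that every bit node has degree $\ge 2$, so $\lam^{-1}(\ell)$ is nonempty and the combination is genuinely positive). Writing $\hat u_\ell^{(t)} = \Bu^{\,\Ba_\ell^{(t)}}$, we get $\bigl(\hat u_\ell^{(t)}\bigr)^{1/\rho^{t-1}} = \Bu^{\,\rho^{-(t-1)}\Ba_\ell^{(t)}} \to \Bu^{\gamma_\ell\Bc} = (\Bu^{\Bc})^{\gamma_\ell}$. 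If $\Bu^{\Bc} < 1$ then $(\Bu^{\Bc})^{\gamma_\ell} < 1$, and since $\rho^{t-1}\to\infty$, raising a quantity converging to something $<1$ to the power $\rho^{t-1}$ forces $\hat u_\ell^{(t)} \to 0$; symmetrically $\Bu^{\Bc} > 1$ forces $\hat u_\ell^{(t)}\to\infty$. This holds simultaneously for every $\ell$, so the termination criterion is eventually met and the output is the all-$0$ (resp.\ all-$1$) vector, as claimed.

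The main obstacle I anticipate is not the spectral asymptotics — those are immediate from the quoted Perron--Frobenius package once $\rho>1$ is established — but the careful bookkeeping that expresses the exponent vector of $\hat u_\ell$ in terms of the rows of $\BA^{(t)}$ and, crucially, checks that the resulting linear combination has a \emph{strictly positive} coefficient on the Perron direction $\Bz$, so that the limiting exponent is a genuinely positive multiple of $\Bc$ rather than possibly zero. This uses the hypothesis that every bit node has degree $\ge 2$ (so no $\lam^{-1}(\ell)$ is empty) and the strict positivity of the Perron vector $\Bz$; one also needs $\rho > 1$ both to make the geometric series $\sum \rho^{-i}$ converge and to make the final ``raise to the power $\rho^{t-1}$'' argument go through, and this is exactly where the degree-$\ge 3$ hypothesis enters.
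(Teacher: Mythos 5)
Your proof is correct and follows essentially the same route as the paper's: establish $\rho>1$ from the row sums of $\BK$, apply the Perron--Frobenius decomposition $\BK^i=\rho^i\Bz\Bystar+(\BK-\rho\Bz\Bystar)^i$ to the geometric sum $\BA^{(t)}=\sum_{i=0}^{t-1}\BK^i\BLam$, normalize (your $\rho^{-(t-1)}$ differs from the paper's $\frac{\rho-1}{\rho^t-1}$ only by a convergent positive factor), and conclude from the sign of $\log\Bu^{\Bc}$. The only cosmetic difference is that you carry the argument through to the estimates $\hat u_\ell$, which the paper defers to Corollary~\ref{c:primitive}, working instead with the edge messages $x_e$ in the theorem itself.
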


\begin{proof}
Since we assumed that the bipartite graph $B$ has all vertices of
degree at least~$2$ and one vertex of larger degree, 
the row sums of $\BK$ are at least~1 and strictly greater than $1$ for
some edge.  Thus $\rho>1$.
We have $\BK^i = \rho^i \Bz\Bystar + (\BK-\rho \Bz\Bystar)^i$ and therefore
\begin{align*}
\sum_{i=0}^{t-1} \BK^i &= \dfrac{\rho^t -1}{\rho-1} \Bz\Bystar + 
\sum_{i=0}^{t-1}(\BK-\rho \Bz\Bystar)^i \\
\dfrac{\rho -1}{\rho^t-1} \sum_{i=0}^{t-1} \BK^i &=  \Bz\Bystar +
\dfrac{\rho -1}{\rho^t-1}\sum_{i=0}^{t-1}(\BK-\rho \Bz\Bystar)^i  
\end{align*}
Since the eigenvalues of $(\BK-\rho \Bz\Bystar)  $ are less than $\rho$ in
modulus, the final term in the last equation goes to $0$ as $t $ goes
to $\infty$.  Thus  
\begin{align*}
\lim_{t \rta \infty} \dfrac{\rho -1}{\rho^t-1} \BA^{(t)}
& = \Bz\Bystar \BLam\\
\lim_{t \rta \infty} \dfrac{\rho -1}{\rho^t-1} \Ba_e^{(t)} &= \Bz_e \Bystar \BLam \\
\lim_{t \rta \infty}\left( x_e ^{(t)}\right) ^{\frac{\rho -1}{\rho^t-1}}
&= 
\lim_{t \rta \infty} 
(\Bu^{\Ba_e^{(t)} })^{\frac{\rho -1}{\rho^t-1}} \\ 
&= (\Bu^{\Bc})^{\Bz_e }
\end{align*}
where $\Bc = \Bystar \BLam$.
Since $\rho>1$,  $0<\frac{\rho-1}{\rho^t-1}<1$.  
Thus for any edge $e$, $x_e$ goes to 0 if $\Bu^{\Bc}<1$ and to 
$\infty$ if $\Bu^{\Bc} >1$. 
\end{proof}

The proof gives information about the rate of convergence, which,
for all edges, is roughly exponential with  exponent $\rho$,
$x_e^{(t)} \approx \left( \left(\Bu^{\Bc}\right)^{\Bz_e}\right)^{ \frac{\rho^t -1}{\rho-1}}$.
The base for the exponential growth, $(\Bu^{\Bc})^{\Bz_e }$, depends upon the edge.
The following corollary summarizes this result, and the analagous 
statement for the rate of convergence of the
new estimates, $\hat u_\ell$, which varies with $\ell$. 

\begin{corollary}
\label{c:primitive}
The limiting  behavior of $x_e$ and $\hat{u}_\ell$ at iteration $t$ are
as follows.
\begin{align*}
\lim_{t \rta \infty}\left( x_e^{(t)}\right) ^{\frac{\rho -1}{\rho^t-1}}
&= (\Bu^{\Bc})^{\Bz_e } \\
\lim_{t \rta \infty}\left( \hat{u}_\ell ^{(t)} \right) ^{\frac{\rho -1}{\rho^t-1}}
&= (\Bu^{\Bc})^{\sum_{e \in \lam^{-1}(\ell)} \Bz_e }
\end{align*}
\end{corollary}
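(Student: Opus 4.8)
The plan is to piggyback on the computation already carried out in the proof of Theorem~\ref{t:primitive}. There we established
\[
\lim_{t \rta \infty}\left( x_e ^{(t)}\right) ^{\frac{\rho -1}{\rho^t-1}} = (\Bu^{\Bc})^{\Bz_e },
\]
which is exactly the first displayed limit of the corollary, so nothing new is needed there. For the second limit, recall the New Estimate Step, which in odds-ratio form reads $\hat{u}_\ell^{(t)} = u_\ell \prod_{e \in \lam^{-1}(\ell)} y_e^{(t)}$. By Proposition~\ref{p:monomial}, $y_e^{(t)} = x_{\bar e}^{(t)}$; but since $\bar e$ ranges over $\lam^{-1}(\ell)$ as $e$ does not — wait, more carefully: here $e$ is a directed edge with $\lam(e)=\ell$, and $y_e^{(t)} = x_{\bar e}^{(t)}$ where $\bar e$ shares a check node with $e$. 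In the Local Sum formulation the relevant bookkeeping vector is $\Ba_e^{(t)}$ with $x_e^{(t)} = \Bu^{\Ba_e^{(t)}}$, and one checks from equation~\eqref{e:local sum} that $\hat u_\ell^{(t)}$ corresponds to the exponent vector $\Bdel_\ell + \sum_{f:\lam(\bar f)=\ell,\ \bar f \neq e}\cdots$ — in any case $\hat u_\ell^{(t)} = \Bu^{\Bb_\ell^{(t)}}$ for some exponent vector $\Bb_\ell^{(t)}$ that is a finite sum of the $\Ba_e^{(t)}$ (over a fixed finite index set depending only on $\ell$) plus the constant vector $\Bdel_\ell$.

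The key step is then to take the limit. Since $\frac{\rho-1}{\rho^t-1} \to 0$, the contribution of the fixed vector $\Bdel_\ell$ washes out: $(\Bdel_\ell)\cdot\frac{\rho-1}{\rho^t-1} \to \Bzero$. For the remaining finitely many terms, linearity of the limit gives
\[
\lim_{t\rta\infty}\frac{\rho-1}{\rho^t-1}\,\Bb_\ell^{(t)} = \sum_{e\in\lam^{-1}(\ell)} \Bz_e\,\Bystar\BLam = \Big(\sum_{e\in\lam^{-1}(\ell)}\Bz_e\Big)\Bc,
\]
using $\lim_{t}\frac{\rho-1}{\rho^t-1}\Ba_e^{(t)} = \Bz_e\Bystar\BLam$ from the Theorem's proof. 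Exponentiating $\Bu$ by both sides and using continuity of $t\mapsto \Bu^{t\Bc}$ yields the claimed formula $\lim_t (\hat u_\ell^{(t)})^{\frac{\rho-1}{\rho^t-1}} = (\Bu^{\Bc})^{\sum_{e\in\lam^{-1}(\ell)}\Bz_e}$.

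The only real subtlety — and the step I expect to need the most care — is the bookkeeping that identifies the exponent vector of $\hat u_\ell^{(t)}$ correctly as a finite $\N$-linear combination of the $\Ba_e^{(t)}$'s with coefficients independent of $t$, so that linearity of the limit legitimately applies; one must be careful about whether the sum runs over directed edges $e$ with $\lam(e)=\ell$ and whether it is $\Ba_e^{(t)}$ or $\Ba_{\bar e}^{(t)}$ (equivalently $x_e$ versus $y_e$) that appears. Once that indexing is pinned down, the limit is immediate from the Theorem and everything else is routine.
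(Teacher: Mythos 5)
Your approach is the same as the paper's: the corollary carries no separate proof there and is meant to be read off from the proof of Theorem~\ref{t:primitive} together with the New Estimate Step, exactly as you do. The one point you flag but leave open is the one worth pinning down. Since $y_e^{(t)} = x_{\bar e}^{(t)}$ (Proposition~\ref{p:monomial}), the new estimate is $\hat u_\ell^{(t)} = u_\ell \prod_{e\in\lam^{-1}(\ell)} x_{\bar e}^{(t)}$, so its exponent vector is $\Bdel_\ell + \sum_{e\in\lam^{-1}(\ell)} \Ba_{\bar e}^{(t)}$; the $\Bdel_\ell$ washes out as you say, and the limit is $\bigl(\sum_{e\in\lam^{-1}(\ell)}\Bz_{\bar e}\bigr)\Bc$, i.e.\ the sum runs over the conjugates $\bar e$ (equivalently over the edges of $G$ \emph{terminating} at $\ell$), not over $e\in\lam^{-1}(\ell)$ itself. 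These two sums are genuinely different in general: writing $S_\ell=\sum_{e\in\lam^{-1}(\ell)}\Bz_e$ and $T_\ell=\sum_{e\in\lam^{-1}(\ell)}\Bz_{\bar e}$, the eigenvector equation $\BK\Bz=\rho\Bz$ summed over $\sig^{-1}(\ell)$ gives $\rho S_\ell=(\deg\ell-1)T_\ell$, so they agree only when $\deg\ell-1=\rho$ (e.g.\ for regular graphs). Thus the corollary's display, read literally, is off by the positive factor $\rho/(\deg\ell-1)$ in the exponent; this does not affect the parity/convergence conclusion, since both exponents are positive multiples of the same quantity, but it does change the literal limiting value. Your proof is correct once you commit to $\Ba_{\bar e}^{(t)}$ in the bookkeeping; as written, your final display silently substitutes $\Bz_e$ for $\Bz_{\bar e}$ and so reproduces the statement rather than what the computation yields.
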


Let us now turn to the general case, in which $\BK$ may not be
primitive.  Our assumptions on the bipartite
graph $B$ ensure that $\tildeG$ is  strongly connected.  
This is equivalent to $\BK$ being {\em irreducible} 
\cite[Thm. 6.2.24]{Horn} \cite[Ch. 4, Thm. 3.2]{Minc}, since it is the
adjacency matrix of $\tilde G$.  
Thus we are led to the theory of irreducible matrices.

A  square matrix $H$ is {\it  reducible} when there exists a
permutation matrix $P$ such that $P\tr HP= \begin{bmatrix}
A & B \\
0 & C
\end{bmatrix}$ with $A$ and $C$ square matrices.
When $H$ is not reducible, it is called irreducible.
If $H$ is the adjacency matrix of a graph, then it is straightforward
to show that $H$ is reducible if and only if  there is a nontrivial
partition of the vertex set $V$ into $V_1$, $V_2$ such that there is
no edge from $V_2$ to $V_1$. 
There are several other equivalent conditions for irreducibility in 
\cite[\S6.2]{Horn} and in \cite[\S1.2]{Minc}.  

The important properties for analysis of the sum-product algorithm
appear in \cite[Ch.3, Ch. 4 \S3]{Minc}.
Since $\BK$ is irreducible, it has a positive eigenvalue $\rho$ of
maximum modulus.   
There is a positive integer $h$,
called the {\it index of imprimitivity} of $\BK$, satisfying the following
equivalent conditions.
\begin{itemize}
\item $\BK$ has $h$ eigenvalues of modulus $\rho$.  
\item $h$ is the largest positive integer such that $\BK^h$ 
is a block-diagonal matrix with $h$  blocks, each an irreducible matrix.
\item $h$ is the largest integer for which there is a partition of $E$
  into disjoint sets $E_1,\dots, E_h$ such that any edge of $\tildeG$  
  goes from $E_{i+1}$ to $E_{i}$ or $E_1$ to $E_h$.
\item The greatest common divisior of the lengths of the cycles in
  $\tildeG$ is $h$.
\end{itemize}
Much more is known.
The $h$ eigenvalues of $\BK$ with modulus $\rho$ are  $\rho\zeta^i$ where
$\zeta$ is an  $h^{\rm th}$ root of unity. 
Enumerating the elements of $E$ by first taking  
the elements of $E_1$, then $E_2$, and continuing on to $E_h$, the
matrix $\BK$ has the form 
 \[
\BK = \begin{bmatrix}
\Bzero & \BK_1 & \Bzero & \Bzero & \dots & \Bzero\\
\Bzero & \Bzero & \BK_2 & \Bzero & \dots &\Bzero\\
\Bzero & \Bzero & \Bzero & \BK_3 & \dots &\Bzero\\
\Bzero & \Bzero & \Bzero & \Bzero & \dots & \BK_{h-1}\\
\BK_h & \Bzero & \Bzero & \Bzero & \dots &\Bzero\\
\end{bmatrix}
\]
with square $\Bzero$ matrices along the diagonal.  
Computing $\BK^h$, one can see  that 
it is block diagonal with $j^{\rm th}$ block equal to 
$\BK_j\BK_{j+1} \cdots \BK_{h}\BK_1\cdots \BK_{j-1}$.
The conditions above state that this product is irreducible for all
$i$, but the maximality of $h$ actually ensures that these products
are all primitive. 
Furthermore, they have the same nonzero eigenvalues, since for any
matrices $A$, $B$ with compatible dimensions the nonzero eigenvalues
of $AB$ and $BA$ are the same.
Finally, we note that Proposition~\ref{p:cycles}  shows that the gcd
of the cycle lengths in $\tilde G$ equals the gcd of the lengths of
completely admissible cycles in $G$. 

Let $\Bystar_1$ and $\Bz_1$ be  Perron  vectors for the
product  $\BK_1\BK_2 \cdots \BK_h$
satisfying $\Bystar_1\Bz_1=1$.  For $j=1,\dots,h$, define 
\begin{align*}
\Bystar_j &= \rho^{1-j}\Bystar_1 \BK_1\BK_2\cdots \BK_{j-1}\\
\Bz_{j} &= \rho^{j-h-1} \BK_j\BK_{j+1}\cdots \BK_h \Bz_1
\intertext{Now, $\Bystar_j$ and $\Bz_j$
are Perron vectors for $\BK_j\BK_{j+1} \cdots \BK_{h}\BK_1\cdots
\BK_{j-1}$, as is readily verified; moreover,}
\Bystar_j \Bz_j &= \rho^{-h} \Bystar_1 \BK_1\BK_2\cdots \BK_h \Bz_1 = 1 \\
\Bystar_j \BK_j &= \rho \Bystar_{j+1}  \\
\BK_j \Bz_{j+1} &= \rho \Bz_j 
\end{align*}
where the subscripts are computed modulo $h$, with representatives
$\set{1,2,\dots,h}$. 

Note that $\Bz_j$ is a column vector with $\abs{E_j}$ entries and
$\Bystar_j$ is a row vector with $\abs{E_j}$ entries.
We form the $\abs{E} \times h$ matrix $\BZ$ and the $h \times \abs{E}$
matrix $\BY$ as follows:
\begin{align*}
\BZ = \begin{bmatrix}
\Bz_1 & \Bzero & \Bzero &\dots & \Bzero \\
\Bzero & \Bz_2 & \Bzero  &\dots & \Bzero \\
\Bzero & \Bzero & \Bz_3  &\dots & \Bzero \\
\dots &  \dots &  \dots &  \dots &\dots   \\
\Bzero & \Bzero & \Bzero & \dots &\Bz_h 
\end{bmatrix},
&\qquad
\BY = \begin{bmatrix}
\Bystar_1 & \Bzero & \Bzero &\dots & \Bzero \\
\Bzero & \Bystar_2 & \Bzero  &\dots & \Bzero \\
\Bzero & \Bzero & \Bystar_3  &\dots & \Bzero \\
\dots &  \dots &  \dots &  \dots &  \dots \\
\Bzero & \Bzero & \Bzero & \dots &\Bystar_h 
\end{bmatrix}
\intertext{We also form the $h\times h$ matrices}
\BTheta 
= \begin{bmatrix}
1 & \rho & \rho^2 &\dots & \rho^{h-1} \\
\rho^{h-1} & 1 & \rho  &\dots & \rho^{h-2} \\
\rho^{h-2} & \rho^{h-1} &1  &\dots & \rho^{h-3} \\
\vdots &  \vdots &  \vdots &  \vdots &  \vdots \\
\rho & \rho^{2}  & \rho^3 & \dots & 1  
\end{bmatrix},
&\qquad \BP= \begin{bmatrix}
0 & 1 & 0 &0 &  \dots & 0 \\
0 & 0 & 1 & 0 & \dots & 0 \\
0 & 0 & 0 & 1 & \dots & 0 \\
\vdots &  \vdots & \vdots &  \vdots &  \vdots &  \vdots \\
1 & 0 & 0 &0 &  \dots & 0 
\end{bmatrix}
\end{align*}

\begin{lemma}
\label{l:not primitive}
With the matrices $\BZ, \BY, \BTheta,\BP$ defined above and for any
 $0\leq r <h$,
\[
\lim_{q \rta \infty} \dfrac{\rho^h-1}{\rho^{hq}-1}
\sum_{i=0}^{hq+r-1} \BK^i 
= 
\rho^r \BZ\BP^r \BTheta  \BY
\]
\end{lemma}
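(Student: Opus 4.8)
The plan is to reduce the claim to the primitive case already handled in the proof of Theorem~\ref{t:primitive}, now applied to the diagonal blocks of $\BK^h$. Write $\BM_j=\BK_j\BK_{j+1}\cdots\BK_{j-1}$ for the $j$-th such block (subscripts modulo $h$); it is primitive with Perron eigenvalue $\rho^h$ and Perron vectors $\Bz_j,\Bystar_j$ normalized by $\Bystar_j\Bz_j=1$, exactly as set up before the lemma. The Perron--Frobenius decomposition recalled before Theorem~\ref{t:primitive} gives
\[
\BM_j^{p}=\rho^{hp}\,\Bz_j\Bystar_j+\BN_j^{p},\qquad \BN_j:=\BM_j-\rho^h\Bz_j\Bystar_j ,
\]
where $\BN_j$ has spectral radius strictly smaller than $\rho^h$. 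Since $\BK^h$ is block-diagonal with $j$-th block $\BM_j$, the matrices $\BZ$ and $\BY$ are arranged precisely so that $\BZ\BY$ is the block-diagonal matrix with $j$-th block $\Bz_j\Bystar_j$.

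First I would break the sum $\sum_{i=0}^{hq+r-1}\BK^i$ up by the residue of the exponent modulo $h$. Writing $i=hn+s$ with $0\le s\le h-1$, the value $n$ runs over $0,\dots,q$ when $s<r$ and over $0,\dots,q-1$ when $s\ge r$, so
\[
\sum_{i=0}^{hq+r-1}\BK^i=\sum_{s=0}^{r-1}\BK^s\sum_{n=0}^{q}(\BK^h)^n+\sum_{s=r}^{h-1}\BK^s\sum_{n=0}^{q-1}(\BK^h)^n .
\]
On the $j$-th block, $\sum_{p=0}^{m}\BM_j^{p}=\frac{\rho^{h(m+1)}-1}{\rho^h-1}\Bz_j\Bystar_j+\sum_{p=0}^{m}\BN_j^{p}$, and, just as in the proof of Theorem~\ref{t:primitive} (with $\BM_j,\rho^h$ replacing $\BK,\rho$, and using $\rho>1$), the normalized tail $\frac{\rho^h-1}{\rho^{hq}-1}\sum_{p=0}^{m}\BN_j^{p}$ vanishes as $q\to\infty$. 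Recombining the block contributions for $m=q$ and $m=q-1$ then gives $\frac{\rho^h-1}{\rho^{hq}-1}\sum_{n=0}^{q}(\BK^h)^n\to\rho^h\BZ\BY$ and $\frac{\rho^h-1}{\rho^{hq}-1}\sum_{n=0}^{q-1}(\BK^h)^n\to\BZ\BY$.

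What remains is purely algebraic. The key point is the shift identity $\BK^s\BZ=\rho^s\BZ\BP^s$, obtained by iterating $\BK\BZ=\rho\BZ\BP$, which in turn is immediate from the block-cyclic form of $\BK$ and the relations $\BK_j\Bz_{j+1}=\rho\Bz_j$ recalled before the lemma. Hence $\BK^s\,\BZ\BY=\rho^s\BZ\BP^s\BY$, and feeding the two limits above into the split yields
\[
\lim_{q\to\infty}\frac{\rho^h-1}{\rho^{hq}-1}\sum_{i=0}^{hq+r-1}\BK^i=\BZ\Bigl(\sum_{s=0}^{r-1}\rho^{h+s}\BP^s+\sum_{s=r}^{h-1}\rho^s\BP^s\Bigr)\BY .
\]
Finally, $\BTheta=\sum_{s=0}^{h-1}\rho^s\BP^s$ and $\BP^h=\BI$, so $\rho^r\BP^r\BTheta=\sum_{t=0}^{h-1}\rho^{r+t}\BP^{r+t}$; reindexing by $s\equiv r+t\pmod h$ sorts these into the terms with $s\ge r$ (coefficient $\rho^s$) and those with $s<r$ (coefficient $\rho^{s+h}$), which is exactly the parenthesized operator. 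Therefore the limit equals $\rho^r\BZ\BP^r\BTheta\BY$.

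I expect the only real difficulty to be keeping the bookkeeping straight: verifying the shift identity $\BK^s\BZ=\rho^s\BZ\BP^s$ and the final permutation identity requires being careful with the cyclic indexing modulo $h$ and with the conventions chosen for $\BK,\BZ,\BP,\BTheta$, and in particular one must check that the leftover partial cycle (the exponents in $[hq,\,hq+r-1]$) enters with the correct extra factor $\rho^h$ rather than $1$. All of the analytic content is a direct transcription of the primitive-case estimate, using the standing assumption $\rho>1$.
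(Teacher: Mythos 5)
Your proof is correct, and its analytic core---applying the Perron--Frobenius decomposition from the proof of Theorem~\ref{t:primitive} to the primitive diagonal blocks of $\BK^h$, together with the shift identity $\BK\BZ=\rho\BZ\BP$---is exactly what the paper uses. The difference is in how the finite geometric sum is organized and how $\BTheta$ appears. The paper writes $\sum_{i=0}^{hq+r-1}\BK^i=\BK^r\bigl(\sum_{j=0}^{q-1}\BK^{hj}\bigr)\bigl(\BK^{h-1}+\cdots+\BK+\BI\bigr)+\bigl(\BK^{r-1}+\cdots+\BI\bigr)$, discards the trailing remainder in the normalized limit, writes out $\BK^{h-1}+\cdots+\BI$ as an explicit $h\times h$ array of block products, and multiplies it against the block-diagonal limit $\BZ\BY$, using the recursions $\Bystar_j\BK_j=\rho\Bystar_{j+1}$ to recognize the result as $\BZ\BTheta\BY$; the factor $\rho^r\BZ\BP^r$ is then appended via the same shift identity you invoke. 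You instead partition the exponents by residue modulo $h$, so the leftover partial cycle is absorbed into the $n=q$ terms of the first group (whence the clean extra factor $\rho^h$), and you produce $\BTheta$ from the operator identity $\BTheta=\sum_{s=0}^{h-1}\rho^s\BP^s$ together with $\BP^h=\BI$ rather than from an entrywise block computation. The two regroupings are algebraically equivalent and both are complete; yours trades the explicit block-matrix multiplication for the final cyclic reindexing, and has the minor virtue of making visible why the partial cycle contributes with weight $\rho^h$ in one telling while appearing as a negligible remainder in the other.
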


\begin{proof} 
First we write
\begin{align*}
\sum_{i=0}^{hq+r-1} \BK^i &= 
\BK^r\Big(\sum_{i=0}^{hq-1} \BK^i \Big)  + \Big( \BK^{r-1} + \dots + \BI \Big)\\
&= \BK^r\Big(\sum_{j=0}^{q-1} \BK^{hj} \Big)
\Big(\BK^{h-1}+ \BK^{h-2} +\dots + \BK+\BI \Big)  + \Big( \BK^{r-1} +
\dots + \BI \Big)
\end{align*}
We may ignore the last term since multiplying it by $\dfrac{\rho^h-1}{\rho^{hq}-1}$ and
taking the limit as $q$ goes to infinity gives 0.
Recall that $\BK^h$ is a block-diagonal matrix with entries $\BK_1\cdots
\BK_h$, $\BK_2\cdots \BK_h\BK_1$, and so on to
$\BK_h\BK_1\cdots\BK_2$.  
Each of these matrices is primitive and each has
$\rho^h$ as its largest eigenvalue.  As in the
proof of Theorem~\ref{t:primitive}, where $\BK$ is primitive, 
large powers of these matrices
are approximated using the product of Perron vectors.  Thus we have 
\begin{align*}
\lim_{q \rta \infty}\dfrac{\rho^{h}-1}{\rho^hq-1}\sum_{j=0}^{q-1}
\BK^{hj} 
& =
\begin{bmatrix}
\Bz_1\Bystar_1 & \Bzero & \Bzero & \Bzero & \dots & \Bzero \\
\Bzero & \Bz_2\Bystar_2 & \Bzero & \Bzero &  \dots & \Bzero \\
\Bzero & \Bzero &\Bz_3\Bystar_3 & \Bzero &  \dots & \Bzero \\
\dots & \dots&\dots &\dots &\dots&\Bzero \\
\Bzero & \Bzero & \Bzero & \Bzero & \dots & \Bz_h\Bystar_h 
\end{bmatrix}
\end{align*}
It is readily checked that $\BK^{h-1}+ \BK^{h-2} +\dots + \BK+ \BI$ is
equal to
\begin{align*}
\begin{bmatrix}
\BI & \BK_1 & \BK_1 \BK_2 & \BK_1\BK_2\BK_3 & \dots & \BK_1\BK_2 \dots \BK_{h-1} \\
\BK_2\dots \BK_{h-1}\BK_h & \BI & \BK_2 & \BK_2\BK_3 & \dots & \BK_2\dots \BK_{h-1}\\
\BK_3 \dots \BK_{h-1}\BK_h & \BK_3 \dots \BK_{h-1}\BK_h \BK_1& \BI & \BK_3 & \dots &
\BK_3\dots \BK_{h-1}\\
\dots & \dots&\dots &\dots &\dots& \dots \\
\BK_h  & \BK_h\BK_1 & \BK_h\BK_1\BK_2 & \BK_h\BK_1\BK_2 \BK_3 &  \dots & \BI\\
\end{bmatrix}
\end{align*}
Computing the product of these last two expressions and simplifying
using the formulas for $\Bystar_i$, we arrive at the desired formula
for $r=0$.

\begin{align*}
\lim_{q\rta \infty}\dfrac{\rho^h-1}{\rho^{hq}-1}\sum_{i=0}^{hq-1} \BK^i 
&=\begin{bmatrix} 
\Bz_1\Bystar_1 & \rho \Bz_1 \Bystar_2  &\rho^2 \Bz_1 \Bystar_3  & \rho^3 \Bz_1
\Bystar_4  & \dots &\rho^{h-1} \Bz_1 \Bystar_h  \\
\rho^{h-1} \Bz_2 \Bystar_1  & \Bz_2\Bystar_2 &\rho  \Bz_2\Bystar_3  &\rho^2
\Bz_2\Bystar_4  &  \dots &\rho^{h-2} \Bz_2 \Bystar_h  \\ 
\rho ^{h-2}\Bz_3\Bystar_1  & \rho ^{h-1} \Bz_3 \Bystar_2  &\Bz_3\Bystar_3 &
\rho \Bz_3 \Bystar_4  &  \dots & \rho^{h-3} \Bz_3 \Bystar_h  \\ 
\dots & \dots&\dots &\dots &\dots&\dots  \\
\rho \Bz_h \Bystar_1  &\rho^2 \Bz_h \Bystar_2  &\rho ^3 \Bz_h \Bystar_3
&\rho^4 \Bz_h \Bystar_4  & \dots & \Bz_h\Bystar_h  
\end{bmatrix} \\
&= \BZ\BTheta \BY 
\end{align*}

It is readily verified that $\BK\BZ = \rho \BZ \BP$,
so $\BK^r\BZ = \rho^r\BZ \BP^r$.
Thus we have for any~$r$,
\begin{align*}
\lim_{h\rta \infty}\dfrac{\rho^h-1}{\rho^{hq}-1}
\BK^r \sum_{i=0}^{hq-1} \BK^i 
&= \BK^r \BZ\BTheta \BY \\
&= \rho^r \BZ \BP^r \BTheta \BY
\end{align*}
\end{proof}

For the following theorem, let $\BLam_i$ be the submatrix of $\BLam$
with rows indexed by elements of $E_i$ and let $\Bc_i =
\Bystar_i\BLam_i$, so that
\begin{align*}
\begin{bmatrix}
\Bc_1 \\
\Bc_2 \\
\Bc_3 \\
\dots \\
\Bc_h 
\end{bmatrix}
= 
\begin{bmatrix}
\Bystar_1 & \Bzero & \Bzero &\dots & \Bzero \\
\Bzero & \Bystar_2 & \Bzero  &\dots & \Bzero \\
\Bzero & \Bzero & \Bystar_3  &\dots & \Bzero \\
\dots &  \dots &  \dots &  \dots &  \dots \\
\Bzero & \Bzero & \Bzero & \dots &\Bystar_h 
\end{bmatrix}
\begin{bmatrix}
\BLam_1 \\
\BLam_2 \\
\BLam_3 \\
 \dots \\
\BLam_h
\end{bmatrix}
\end{align*}
For notational convenience, we define $\Bc_i$ for any integer by
reducing the index modulo $h$ using representatives $1,2,\dots,h$.

\begin{theorem}
\label{t:imprimitive}
For $r \in \set{0,\dots,h-1}$ and $e \in E_i$, 
\[
\lim_{q \rta \infty}\left( x_e ^{(hq+r)}\right) ^{\frac{1}{\rho^r}\cdot\frac{\rho^h -1}{\rho^{hq}-1}}
= \Big( \prod_{j=0}^{h-1} \left(\Bu^{\Bc_{r+i+j}}\right)^{\rho^j}\Big)^{\Bz_e } 
\]

The sum-product algorithm converges if and only if the following 
products have the same parity as $r$ varies.
\[
U_r = \prod_{j=0}^{h-1}\left(\Bu^{\Bc_{r+j}}\right)^{\rho^j}
\]

\end{theorem}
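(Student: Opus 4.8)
The plan is to mirror the argument of Theorem~\ref{t:primitive}, but tracking the $h$ residue classes of the iteration index separately, and to use Lemma~\ref{l:not primitive} as the workhorse. Recall that the local sum algorithm gives $\BA^{(t)} = \big(\sum_{i=0}^{t-1}\BK^i\big)\BLam$, so for $t = hq+r$ I would apply Lemma~\ref{l:not primitive} directly:
\[
\lim_{q\rta\infty} \dfrac{\rho^h-1}{\rho^{hq}-1}\,\BA^{(hq+r)}
= \rho^r\,\BZ\BP^r\BTheta\BY\BLam.
\]
The right-hand side is a block expression, and the $e^{\rm th}$ row for $e\in E_i$ should, after multiplying out $\BZ\BP^r\BTheta\BY\BLam$ using the definitions of $\BP$, $\BTheta$, and the block structure of $\BZ$, $\BY$, come out to $\Bz_e$ times a linear combination $\sum_{j=0}^{h-1}\rho^j\,\Bc_{r+i+j}$ of the vectors $\Bc_k = \Bystar_k\BLam_k$. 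First I would verify this bookkeeping: $\BY\BLam$ has $k^{\rm th}$ block row $\Bc_k$; $\BTheta$ mixes these with powers of $\rho$ according to the circulant pattern $\BTheta_{ab}=\rho^{(b-a)\bmod h}$; $\BP^r$ is the shift; and the $i^{\rm th}$ block row of $\BZ$ picks out $\Bz_e$ for $e\in E_i$. Combining the shift and the circulant offsets gives the index $r+i+j$ (read modulo $h$). Exponentiating, since $x_e^{(t)} = \Bu^{\Ba_e^{(t)}}$ and $0 < \frac{1}{\rho^r}\cdot\frac{\rho^h-1}{\rho^{hq}-1} < 1$ for large $q$ (as $\rho>1$ by the same row-sum argument as before), I get the displayed limit for $x_e^{(hq+r)}$.

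For the convergence criterion, recall that the SPA terminates by returning the parity of $\hat u_\ell^{(t)} = u_\ell\prod_{e\in\lam^{-1}(\ell)} y_e^{(t)}$, and by Proposition~\ref{p:monomial} each $y_e^{(t)} = x_{\bar e}^{(t)}$, so $\hat u_\ell^{(t)}$ is again a monomial whose exponent vector is $\Bdel_\ell$ plus a sum of rows of $\BA^{(t)}$; its limiting growth rate along the subsequence $t\equiv r\pmod h$ is governed by $U_r^{\,\sum_{e}\Bz_e}$ with a positive exponent, so the parity of $\hat u_\ell^{(t)}$ along that subsequence stabilizes to the parity of $U_r$ (the finite additive correction $\Bdel_\ell$ and the positive multiplier $\sum_{e\in\lam^{-1}(\ell)}\Bz_e$ do not affect which side of $1$ the quantity tends to, provided $U_r\ne 1$). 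The algorithm, which reads a single binary vector, converges precisely when all these subsequential limits agree across $\ell$ and across $r$; the per-$\ell$ dependence washes out because every $\Bz_e>0$, so the condition reduces to: the parity of $U_r$ is the same for every $r\in\{0,\dots,h-1\}$. That is the stated criterion. (One should note that $U_{r+i+\bullet}$ and $U_r$ differ only by a cyclic relabeling of which $\Bc$ appears first, i.e.\ $\prod_j(\Bu^{\Bc_{r+i+j}})^{\rho^j}$ and $\prod_j(\Bu^{\Bc_{r+j}})^{\rho^j}$ have the same parity is \emph{not} automatic — but the index shift by $i$ just permutes the subsequences $r$, so ``same parity for all $r$'' is what matters and $i$ is irrelevant.)

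The main obstacle I expect is the matrix bookkeeping in the first paragraph: correctly composing the shift $\BP^r$, the circulant $\BTheta$, and the block-diagonal $\BZ,\BY$ to land on the clean index $r+i+j$, and in particular getting the modular reduction and the power $\rho^j$ attached to the right block. The relations $\Bystar_j\BK_j=\rho\Bystar_{j+1}$ and $\BK_j\Bz_{j+1}=\rho\Bz_j$ established before the lemma, together with $\BK\BZ=\rho\BZ\BP$, are exactly what makes this collapse, so the work is careful index-chasing rather than anything conceptually new. A secondary subtlety is the boundary case $U_r=1$ (some $\Bu^{\Bc}$ exactly $1$), where parity is undefined; I would either assume genericity of $\Bu$ or remark that this is a measure-zero exceptional set on which the termination criterion may fail to be triggered, consistent with the $\eps$-threshold in the algorithm statement.
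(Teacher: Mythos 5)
Your proposal is correct and follows essentially the same route as the paper: apply Lemma~\ref{l:not primitive} at $t=hq+r$, expand $\BZ\BP^r\BTheta\BY\BLam$ blockwise to identify the $e^{\rm th}$ row as $\Bz_e\sum_{j}\rho^j\Bc_{r+i+j}$, exponentiate, and observe that varying $r$ for any fixed $i$ sweeps out the same set of products $U_r$, so the class index $i$ is immaterial to the convergence criterion. The paper's own proof is just a terser version of this same index-chasing.
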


\begin{proof}
Dividing the equation of Lemma~\ref{l:not primitive} by $\rho^r$ and
multiplying by $\BLam$ we have 
$\frac{1}{\rho^r}\cdot\frac{\rho^h-1}{\rho^{qh} -1} \BA^{(qh+r)}
\approx \BZ\BP^r\BTheta\BY\BLam$. 
The matrix $\BY\BLam$ is $h\times \abs{L}$ with $i^{\rm th}$ row $\Bc_i$, so the
$i^{\rm th}$ row of $\BTheta\BY\BLam$ is 
$\Bc_i + \Bc_{i+1}\rho+ \Bc_{i+2}\rho^2+\dots+\Bc_{i+h-1}\rho^{h-1}$ where
subscripts are computed modulo $h$ using representatives $1,\dots,h$.
The $i^{\rm th}$ row of $\BP^r\BTheta\BY\BLam$ is 
$\Bc_{r+i} + \Bc_{r+i+1}\rho+
\Bc_{r+i+2}\rho^2+\dots+\Bc_{r+i+h-1}\rho^{h-1}$.
Thus if $e \in E_i$ then 
\begin{align*}
\lim_{q \rta \infty} \frac{1}{\rho^r} \frac{\rho^h -1}{\rho^{hq-1}} \Ba_e ^{(hq+r)}
 &= \Bz_e \sum_{j=0}^{h-1} \Bc_{r+i+j}\rho^j\\
\intertext{and}
\lim_{q \rta \infty}\left( x_e ^{(hq+r)}\right) ^{\frac{1}{\rho^r}
\frac{\rho^h -1}{\rho^{hq-1}}}
&= \Big( \prod_{j=0}^{h-1} \left(\Bu^{\Bc_{r+i+j}}\right)^{\rho^j}\Big)^{\Bz_e } 
\end{align*}

For convergence of $x_e$,  the value of $\Bz_e$ is irrelevant.
For $i=0$  we get the condition stated in the Theorem, but for other
$i$ we get the same set of products since the subscripts on $\Bc_i$ are
computed modulo $h$.
\end{proof}

\section{Examples}\label{s:examples}
\begin{example}
Let $G$ be $d$-regular with $n$ vertices and therefore $nd$ edges. 
Then $\tildeG$ and its adjacency matrix $\BK$ are $(d-1)$-regular.  Let
us assume that $\BK$ is primitive. Then the Perron eigenvalue of $\BK$ is 
$\rho=d-1$ and the Perron vectors, of length $nd$, are constant, that
is, multiples of $[1,1,\dots, 1]$ or its transpose.
We may take  $\Bz$ to be 1 in each component and $\Bystar$ to be $1/nd$
in each component.  The matrix $\BLam$ is $nd\times n$ and has $d$ ones in
each column, so $\Bc=\Bystar\BLam = [ 1/n, 1/n, \dots, 1/n]$.  
Theorem~\ref{t:primitive} says that the SPA 
converges to 0 when $\Big(\prod_{\ell \in L}u_\ell\Big)^{1/n} <1$, or,
equivalently, when $\prod_{\ell \in L}u_\ell <1$.  The SPA converges to 
$\infty$ when $\prod_{\ell \in L}u_\ell >1$.  The convergence is roughly
exponential with base $\prod_{\ell \in L}u_\ell >1$ and 
exponent $\rho/n$.
\end{example}

\begin{example}
Consider the graph $G$  and its flow graph $\tilde G$ below.
There are 3~vertices, $A,B,C$ and 4~conjugate pairs of edges in $G$;
one edge of each pair is shown.
\[
\xymatrix{
A \ar@/^1pc/[rr]^1
\ar@/_1pc/[rr]_2  \ar@{->}[ddr]_3 & 
& B \ar@{<-}[ddl]^4 \\
&& \\
 &  C &
}
\qquad 
 \xymatrix{
 &1\ar@/_1pc/[rrr] &&&{\bar 2} \ar@/_1pc/[lll]&\\
 {\bar 3}\ar@{->}[ur]\ar@{->}[dr]  & &
\bar 4\ar@{<-}[ul] \ar@{<-}[dl] \ar@{->}[ll] &
 3\ar@{->}[rr] \ar@{<-}[dr] \ar@{<-}[ur]   &&
 4 \ar@{->}[dl] \ar@{->}[ul] 
\\
 &2\ar@/_1pc/[rrr]&&&{\bar 1} \ar@/_1pc/[lll]&\\
 }
\]
\bigskip
The matrix $\BK$ is $8\times 8$ and its largest eigenvalue, $\rho$,
is the positive root of $x^3-x^2-2$.
Taking the edges in the order
$1, 2, 3,4, \bar 1, \bar 2, \bar 3, \bar 4$, the eigenvector $\Bystar $
is, up to multiple, 
$[1, 1, \rho-1, \rho^2-\rho, 1, 1,\rho^2-\rho, \rho-1]$.
Taking the vertices in alphabetical order, the transpose of $\BLam$ is
\[\BLam ^T = 
\begin{bmatrix}
1 & 1 & 1 &0   & 0 & 0 &0 & 0  \\
0 & 0 & 0 &0   & 1 & 1 &0 & 1  \\
0 & 0 & 0 &1   & 0 & 0 &1 & 0  
\end{bmatrix}
\]
so $\Bc = \Bystar \BLam = [\rho+1, \rho+1, 2(\rho^2-\rho)]$.
Since $\rho\approx 1.6956$, we have
$\Bystar \approx [1,1, 0.7, 1.2, 1, 1, 1.2, 0.7]$ and 
$\Bc \approx [2.7, 2.7, 2.4] $.
Not surprisingly, based on the higher degree of the nodes $A$ and $B$,
the input values for $A$ and $B$ have a
greater influence on convergence than does $C$.
Notice that it is possible for the SPA to return the codeword 000 when
111 is most likely.  Such is the case for  $u_A=u_B = 0.5$, and $u_C =
4.5$.  
\end{example}

The last example might lead one to suppose that higher-degree nodes
have a greater influence on 
convergence than lower-degree nodes---that is, if $\ell$ has higher
degree than $\ell'$, then $\Bc_\ell > \Bc_{\ell'}$---but this is not
necessarily  the case.  

\begin{example}
Consider the graph below. Up to scaling, 
\[\Bc\approx  [2.1, 2.1, 1.7, 1.2, 1.8, 1.3, 1.3, 1.1, 1.6, 1.6,
1.6, 1.6 ] \]
Although the node at the top has higher degree than the nodes at the
bottom,  the corresponding component in $\Bc$ is smaller; {\em e.g.}, $\Bc_4 <
\Bc_8$.

\[
 \xymatrix{
&&4&& \\
5 \ar@{<->}[urr] \ar@{<->}[d]   && 7 \ar@{<->}[u]
\ar@{<->}[d]
&& 6 \ar@{<->}[ull] \ar@{<->}[d]\\
1 && 3\ar@{<->}[ll] \ar@{<->}[rr] && 2 \\
8\ar@{<->}[u] \ar@{<->}[urrrr]& 9 \ar@{<->}[ul]
\ar@{<->}[urrr] && 10 \ar@{<->}[ulll] \ar@{<->}[ur] & 
11 \ar@{<->}[u] \ar@{<->}[ullll]
}
\]
\end{example}

\begin{example}
 Let $G$ be bipartite, so that $L$ is the disjoint union of $L_1$ and
$L_2$ and $E$ is the disjoint union of $E_1$ and $E_2$ with all edges
in $E_i$ having source in $L_i$.  Conjugation gives a bijection of
$E_1$ with $E_2$.  Enumerating the edges of $E_1$ before those of
$E_2$ and the vertices of $L_1$ before those in $L_2$ we have 
\[
\BK= \begin{bmatrix}
0 & \BK_1\\
\BK_2 & 0 
\end{bmatrix}
\qquad
\BLam= \begin{bmatrix}
\BLam_1 & 0\\ 
 0  & \BLam_2
\end{bmatrix}
\]
Notice that $\BLam_i$ differs slightly from the notation used in the
theorem, where $\BLam_i$ is the upper $\abs{E_1}$ rows of $\BLam$. 

Let us assume that the index of imprimitivity of $\BK$ is $h=2$.
Suppose that $G$ is $(d_1, d_2)$-regular and that $n_i=\abs{L_i}$, so
that $\abs{E_i} = n_id_i$, and $n_1d_1=n_2d_2$.
Each edge in $E_1$ has $d_1-1$ non-conjugate edges which feed
into it, so  $\BK_1$ is $(d_1-1)$-regular. 
Similarly,  $\BK_2$ is $(d_2-1)$-regular.
Each row and each column of the matrices $\BK_1\BK_2$ and $\BK_2\BK_1$
sums to $(d_1-1)(d_2-1)$, so  $\BK_1\BK_2$ and $\BK_2\BK_1$ have
Perron eigenvalue  $(d_1-1)(d_2-1)$ and Perron eigenvectors that are
constant.
The Perron eigenvalue of $\BK$ is  $\rho = \sqrt{(d_1-1)(d_2-1)}$.
We take $\Bystar_1$ to be $1/\sqrt{d_1-1}$ in each component.
Then  $\Bystar_2 = \rho^{-1}\Bystar_1 \BK_1$ is $1/\sqrt{d_2-1}$ in
each component.   

Since $\BLam_i$ has $d_i$ 1's in each column, $\Bystar_i\BLam_i $ is 
$d_i/\sqrt{d_i-1}$ in each component. 
In the notation of  Theorem~\ref{t:imprimitive} we have
\begin{align*}
(\Bc_i)_\ell&= \begin{cases}
\frac{d_i}{\sqrt{d_i-1}} & \text{ if } \ell\in E_i  \\
0 & \text{ otherwise} \\
\end{cases} \\
U_0 &=  \Bu^{\Bc_2}\left(\Bu^{\Bc_1}\right)^\rho 
= \Big( \prod_{\ell \in L_2} u_\ell\Big) ^{\frac{d_2}{\sqrt{d_2-1}}}
\Big(\prod_{\ell \in L_1} u_\ell\Big) ^{d_1 \sqrt{d_2-1}}  \\
U_1 &= \Bu^{\Bc_1}\left(\Bu^{\Bc_2}\right)^\rho 
= \Big(\prod_{\ell \in L_1} u_\ell\Big)^{\frac{d_1}{\sqrt{d_1-1}}}
\Big( \prod_{\ell \in L_2} u_\ell\Big)^{d_2\sqrt{d_1-1}} 
\end{align*}
Notice that for $U_0$ the ratio of the exponents appearing in the
formula is $d_1(d_2-1)/d_2$, while for $U_1$ it is 
$d_2(d_1-1)/d_2$.  
The SPA converges if and only if $U_0$ and $U_1$ have the same parity.

If $G$ is regular,  $d_1=d_2=d$, then  $n_1=n_2$ and  $\rho= d-1$.
The SPA converges if and only if 
$ \Big(\prod_{\ell \in L_1} u_\ell\Big) \Big(\prod_{\ell \in L_2} u_\ell\Big)^{d-1}$
and
$ \Big(\prod_{\ell \in L_1} u_\ell\Big)^{d-1} 
\Big(\prod_{\ell \in L_2} u_\ell\Big)$
have the same parity.

\end{example}

It is possible to have higher index of primitivity; 
Figure~\ref{f:imprimitivity} gives several examples.

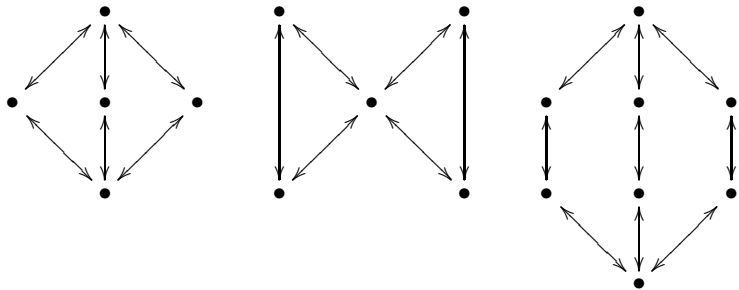
\begin{figure}[hbt]
\label{f:imprimitivity}
\[
\xymatrix{
& \bullet\ar@{<->}[dl]\ar@{<->}[d]\ar@{<->}[dr] & \\
\bullet & \bullet & \bullet\\
& \bullet\ar@{<->}[ul]\ar@{<->}[u]\ar@{<->}[ur] & 
}
 \qquad 
\xymatrix{
\bullet \ar@{<->}[dd] && \bullet \ar@{<->}[dd] \\
& \bullet\ar@{<->}[ul]\ar@{<->}[dl]\ar@{<->}[ur] \ar@{<->}[dr] & \\
\bullet && \bullet
}
\qquad
\xymatrix{
& \bullet\ar@{<->}[dl]\ar@{<->}[d]\ar@{<->}[dr] & \\
\bullet \ar@{<->}[d] &\bullet  \ar@{<->}[d] & \bullet \ar@{<->}[d]\\
\bullet &\bullet& \bullet\\
& \bullet\ar@{<->}[ul]\ar@{<->}[u]\ar@{<->}[ur] & 
}
\]

\caption{Three graphs $G$ such that the index of imprimitivity $h$  of the
  adjacency matrix of $\tildeG$ is larger than $2$.  From left to
  right, $h=4$, $h=3$, $h=6$.}
\end{figure}

\newpage
\section{Trapping Sets}
\label{s:trap}

This section extends results of the previous sections to trapping
sets.
An $(a,b)$ {\it trapping set }  is a subgraph of a bipartite graph
consisting of  $a$ bit nodes
and all their neighboring check nodes such that  $b$ check
nodes on the subgraph have odd degree.  
Richardson introduced the term ``trapping set" in
\cite{Rich03}, where he used a combination of simulation and combinatorial analysis
of trapping sets to predict error floors.
MacKay and Postol used the term {\it near-codeword} for the same
phenomena in \cite{MacPo03}, since for small $b$, a vector with
support on the $a$ bit nodes is ``nearly'' a codeword.
McKay and Postol attributed decoding failure 
of a Margulis code in the error floor region to near-codewords.
 A great deal of research has been focused on trapping sets.
 An edge swapping construction that  eliminates small trapping sets was shown 
 to lower the  error floor in \cite{Vasic08,HanRyan09}. 
Analysis of trapping sets  led to accurate prediction of decoding
 performance of LDPC codes  on the binary symmetric channel using the
 Gallager~A algorithm in \cite{VasicFloor}. 
Counts of small trapping sets were shown to be good predictors for performance of the SPA in \cite{LampBrOS}. 
Planjery et al  \cite{VasicBeyond,VasicMultilevel} developed a message-passing
algorithm for the binary symmetric channel that is similar to belief
propagation, but is designed to overcome errors due to trapping sets.

We are going to identify the conditions under which the SPA converges
on a trapping set, but first we must  properly frame the problem.
As Richardson and others have noted (see for example the ``trapping set
ontology" of \cite{VasicOntology}) simulation has shown that the
trapping sets that affect the error floor invariably have check nodes
of degree at most~2.
The graph obtained by removing the degree-$1$ checks from such a trapping set is either a cycle or
one of the graphs treated in Section~\ref{s:convergence};
we will call this graph the {\it core} of the trapping set.
Note that considering the SPA on a detached trapping set yields nothing useful: In the SPA, a check of degree 1 causes all messages from the
neighboring bit to other checks to be 0 after the first iteration of the
algorithm, and this  causes the SPA to eventually converge to
the zero codeword on the trapping set. 
To avoid this,  we can modify the  trapping set by adding a bit node to each 
check node of degree~1.  This ``virtual bit'' serves as the
communication link between the ambient graph and the core graph.
Starting with an $(a,b)$ trapping set we now have a bipartite graph with 
all check nodes of degree~2 and $b$ bit nodes of degree~1.
It will ease our analysis to add a new degree-1 bit node 
{\it for each bit} of our core graph, along with a check node connecting the
two.  This might seem to violate the essential interest in trapping
sets, the small number $b$ of odd degree check nodes, but, as we now
show,  the  behavior of the
SPA on the trapping set may be easily deduced from our somewhat larger
graph.

Consider for a moment the SPA on an arbitrary bipartite graph $B'$ having
a bit node $\ell'$ of degree~1.  It is straightforward to verify that if $u_\ell'=1$, the
SPA on $B'$ has the same edge messages
at any iteration $t$---as a function of the $u_{\ell}$ for $\ell\in L$---as the
SPA on the graph $B$ obtained from $B'$ by removing $\ell'$ and its incident edge.
In other words, setting $u_{\ell'}=1$, for $\ell'$ a leaf
node, effectively removes $\ell'$ from the algorithm. 
Seen another way, suppose we start with a bipartite graph $B$ and add a bit
node $\ell'$ and an edge to create a new graph $B'$.  The
 behavior of the
SPA on $B$  can be recovered from the behavior of the SPA on $B'$ by
setting $u_{\ell'}=1$.  

Our strategy to understand the behavior of the SPA on trapping sets is therefore
the following: we start with a bipartite graph 
$B= (E,L,R,\lam,\rho)$ satisfying the
properties of Section~\ref{s:convergence}, and the associated
undirected graph $G$. 
We let $\BLam$ and $\BK$ be the edge-vertex incidence matrix and flow
matrix, for $G$.
Let $L= \{\ell_1,\dots \ell_n\}$; now, create a new set
of bit nodes $L' = \{\ell'_1,\dots, \ell'_m\}$ along with new edges 
$E_1 = \{e_1, \dots, e_n\}$ and $E_2 = \{\bar e_1, \dots, \bar e_n\}$ so that
edge $e_i$ goes from $\ell_i'$ to $\ell_i$, and $\bar e_i$ is its 
conjugate edge going from $\ell_i$ to $\ell_i'$. 
Thus, we have effectively attached a leaf node to each node of $G$ to
create a new graph; call this graph $G'$ and the associated bipartite
graph $B'$. 
Since all check nodes have degree~2, the SPA passes monomials in the
input values $u_\ell$, so may analyze the SPA on the graph we have constructed via the local sum algorithm on the associated undirected graph.  
The main result is the following.

\begin{theorem}
\label{t:trapset}
With the preceding notation, the sum-product algorithm on $B'$
with inputs $u_i$ for $\ell_i \in L$ and $u'_{i}$ for $\ell_i'\in L'$  converges  if and only if 
the sum-product algorithm on $B$ with input $u'_iu_i^\rho$ converges.
Here $\rho$ is the Perron eigenvalue for the flow matrix of $B$.
\end{theorem}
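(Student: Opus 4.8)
The plan is to analyze the local sum algorithm on the enlarged graph $G'$ and compare it term-by-term with the local sum algorithm on $G$. The key observation is that the flow matrix $\BK'$ of $G'$ has a very rigid block structure: the new leaf edges $e_i \in E_1$ have source of degree $1$, so nothing feeds into $e_i$ except the reverse leaf edge, which is excluded; hence rows of $\BK'$ indexed by $E_1$ have a single nonzero structure. Concretely, ordering the edges of $E$ (the original edges of $G$) first, then $E_1$, then $E_2$, I expect $\BK'$ to take the block form in which the $E$-rows receive contributions from $E$ and from $E_2$, the $E_1$-rows receive contributions only from $E$, and the $E_2$-rows receive contributions only from $E$ as well. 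Writing out $\BA'^{(t)} = \BLam' + \BK'\BA'^{(t-1)}$ in these blocks, the exponent vector on an original edge $e$ evolves by $\Ba_e \lta \Bdel_{\lam(e)} + \sum \Ba_f + \Ba_{\bar e_j}$, where the extra term $\Ba_{\bar e_j}$ comes from the leaf edge attached at $\lam(e)$, and $\Ba_{\bar e_j}^{(t)} = \Bdel_{\ell_j'} + \Ba_{e_j}^{(t-1)}$ with $\Ba_{e_j}^{(t)} = \Bdel_{\ell_j} + \sum_{f:\lam(\bar f)=\ell_j}\Ba_f^{(t-1)}$ (the sum here being over the original edges into $\ell_j$).

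The next step is to substitute these relations to eliminate the leaf-edge exponent vectors and obtain a recursion purely in the $\Ba_e^{(t)}$ for $e\in E$. I expect this to show that, after a fixed delay, the evolution of the $\Ba_e$ on $G'$ agrees with the evolution on $G$ but with each original input $u_{\ell_j}$ effectively replaced: the leaf at $\ell_j$ contributes a factor $u'_j$ once and then, at each subsequent step, a copy of whatever is flowing out of $\ell_j$, which is exactly what multiplies the degree of $\ell_j$. So the effective input seen by the core dynamics at node $\ell_j$ is $u'_j u_{\ell_j}^{?}$, and the exponent should turn out to be governed by the asymptotic growth rate, namely $\rho$. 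Rather than chase the exact finite-time recursion, I would pass to the limit: apply Theorem~\ref{t:primitive} (or Theorem~\ref{t:imprimitive} in the imprimitive case) to $\BK'$, compute the Perron data of $\BK'$ in terms of that of $\BK$, and read off that the convergence criterion for $B'$ is the parity of $\Bu'^{\Bc'_{L'}}\Bu^{\Bc'_{L}}$ for the appropriate row vector $\Bc' = \Bystar{}'\BLam'$. The heart of the computation is then to show that, writing $\Bystar{}'$ in blocks $(\Bystar{}'_E, \Bystar{}'_{E_1}, \Bystar{}'_{E_2})$, one gets $\Bystar{}'_{E_1}$ proportional to $\rho\,\Bystar{}'_{E_2}$ or the like, and that the resulting linear functional on inputs is exactly $u'_i u_i^\rho$ fed into the original criterion $\Bu^{\Bc}$ with $\Bc = \Bystar\BLam$ for $G$. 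This requires relating a left Perron vector of $\BK'$ to one of $\BK$; I expect $\Bystar{}'_E$ to be (a rescaling of) $\Bystar$, because deleting the leaves returns $G$, and the Perron eigenvalue of $\BK'$ to coincide with $\rho$ since the leaves add no cycles to $\tilde{G'}$ and the index of imprimitivity is unchanged.

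The main obstacle I anticipate is bookkeeping the bipartite/block structure cleanly enough to justify that the Perron eigenvalue is unchanged and that the left Perron vector restricted to $E$ is proportional to $\Bystar$: deleting degree-one vertices from a graph does not change which strongly connected component carries the Perron eigenvalue, but one must verify this at the level of the flow graph $\tilde{G'}$ versus $\tilde G$, checking that the new vertices of $\tilde{G'}$ (the leaf-edge directions) are transient and contribute nothing to the dominant behavior. Once that is in hand, computing the entries of $\Bystar{}'$ on the leaf edges is a short linear-algebra exercise using $\Bystar{}'\BK' = \rho\,\Bystar{}'$ restricted to the $E_1$ and $E_2$ columns, and the identification of the effective input $u'_i u_i^\rho$ should drop out. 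In the imprimitive case one runs the same argument with Theorem~\ref{t:imprimitive}, tracking the block index $i$ of each leaf edge in the cyclic partition $E_1,\dots,E_h$; here I would expect the exponent $\rho$ to be distributed across the appropriate powers $\rho^j$ matching the position of the attachment point in the cycle, but the upshot---that $B'$ converges iff $B$ does on inputs $u'_i u_i^\rho$---should be robust, because $\rho$ is still the relevant growth constant.
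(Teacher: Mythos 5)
Your overall strategy --- exhibit the block structure of the flow matrix $\BK'$ of $B'$, observe that the leaf edges are transient, and relate the Perron data of $\BK'$ to that of the core matrix $\BK$ so that the convergence criterion becomes the parity of $\prod_{\ell}(u'_\ell u_\ell^{\rho})^{\Bc_\ell}$ --- is the right one and matches the paper in outline. However, two steps as written would fail. First, your recursion for the leaf edges violates the no-backtracking rule that defines the flow graph: the edge from $\ell'_j$ into the core and the edge from $\ell_j$ out to $\ell'_j$ are conjugates, so neither feeds the other, and the exponent vector of the edge leaving the leaf is \emph{constantly} $\Bdel_{\ell'_j}$; it does not accumulate $\Ba_{e_j}^{(t-1)}$ as you wrote. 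If that recursion were carried through, each core edge at $\ell_j$ would receive, with a delay of two, the sum of all messages into $\ell_j$ \emph{including the backtracking term}; this effectively adds $2$-cycles to $\tildeG$, changes the spectral data, and would not produce the exponent $\rho$. The correct observation --- and the reason the computation closes --- is that the rows of $(\BK')^i$ indexed by the leaf-to-core edges and the columns indexed by the core-to-leaf edges vanish for every $i\ge 1$, so that $(\BK')^i$ is expressible entirely in terms of $\BK^{i}$, $\BK^{i-1}\BLam$, and $\BLam\tr\BT\BK^{i-2}\BLam$.

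Second, you cannot literally ``apply Theorem~\ref{t:primitive} to $\BK'$'': the graph $B'$ violates the standing hypotheses of Section~\ref{s:convergence} (its new bit nodes have degree $1$), the flow graph of $B'$ is not strongly connected, and $\BK'$ is reducible, so the Perron--Frobenius package is not available off the shelf. You flag this obstacle, but the repair is more than checking that the new vertices are transient: either one re-proves the rank-one dominance of $\sum_{i}(\BK')^i$ for this particular reducible matrix (possible, since $\rho$ still has algebraic multiplicity one in $\BK'$ and its left and right $\rho$-eigenvectors pair nontrivially), or, as the paper does, one writes $(\BA')^{(t)}=\sum_{i=0}^{t-1}(\BK')^i\BLam'$ in closed block form and applies the Perron approximation only to the powers of the primitive core matrix $\BK$. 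Note also that the left $\rho$-eigenvector of $\BK'$ is \emph{zero} on the core-to-leaf block and equal to $\rho^{-1}\Bystar\BLam$ on the leaf-to-core block (not ``proportional to $\rho$ times the other leaf block''); this is precisely where the factor $\rho$ separating the exponent of $u_i$ from that of $u'_i$ comes from. With these corrections your argument does reach the stated criterion, in both the primitive and imprimitive cases.
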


Our primary interest in this theorem is the application to trapping
sets, in which case we set $u_i '=1$ for any $\ell$ which is not
connected to a check node of degree~1 in the trapping set.

\begin{proof}
The edge-vertex matrix for our new graph is
\[
\BLam' = \begin{bmatrix}
\BI & 0 \\
0 & \BLam\\
0 & \BI
\end{bmatrix}
\]
The flow matrix $\BK'$ is a $3\times 3$ block matrix, corresponding to
the partition of the edges into $E_1$, $E$ and $E_2$.
The rows for  $e_i \in E_1$ are 0, because
$\lam(e_i)$ is a leaf, so no edge terminates at $\lam(e_i)$
except $\bar e_i$.
Similarly, columns for  $\bar e_i \in E_2$   are 0, since the edges
in $E_2$ terminate in a leaf.  
The  block matrix for  $E_2 \times E_1$ is 0, since we don't allow
flow into conjugate edges.
Since $e_i \in E_1$  terminates in $\ell_i$, $e_i$ flows into the
edges of $E$ that have source $\ell_i$. 
Thus the block matrix for $E\times E_1$ is
exactly the same as the edge-vertex incidence matrix for $G$,
namely, $\BLam$.
The matrix for $E_2\times E$ is similar, 
the edges that flow into $\bar e_i$, are the edges in $E$ that
terminate at $\ell_i$.
Let $\BT$ be the  $|E|\times |E|$ matrix for defined by 
$\BT_{e,\bar{e}}=1$ and $T$ is 0 elsewhere, so $\BT^2=\BI$.
The $E_2 \times E$ matrix is $\BLam\tr \BT$. 
Finally, the $E \times E$ portion of $\BK'$ is $\BK$ the flow matrix
for $G$. Thus we have

\[\BK' = 
\begin{bmatrix}
0 & 0 & 0 \\
\BLam  & \BK & 0\\ 
0  & \BLam\tr \BT & 0
\end{bmatrix}
\]

The matrix $\BT$ defined gives a useful expression for the flow matrix of the core graph, and for its Perron eigenvector $\Bystar$.  It is readily verified that 
$\BK = \BLam \BLam\tr \BT -\BT$.    Then $\Bz\tr \BT$ is a left eigenvector for $\BK$:
\begin{align*}
( \BLam \BLam\tr\BT -\BT)\Bz &=\rho \Bz, \quad \text{so} \\
\Bz\tr  (\BT \BLam \BLam\tr -\BT)  &= \rho \Bz\tr
\intertext{Multiplying on the right by $\BT$ and noting that $\BT^2= \BI$,}
\Bz\tr \BT  (\BLam \BLam\tr\BT -\BT)  &= \rho \Bz\tr \BT
\end{align*}
Thus we may take $\Bystar = \Bz\tr \BT$ provided we normalize $\Bz$ so that $\Bz\tr\BT\Bz=1$.

Consider the case when $\BK$ is primitive (the imprimitive  case is similar).  
Using the approximation of 
Section~\ref{s:convergence}, for $t>1$, 
\begin{align*}
(\BK')^t &=
\begin{bmatrix}
0 & 0 & 0 \\
\BK^{t-1}\BLam  & \BK^t & 0\\ 
\BLam\tr \BT\BK^{t-2} \BLam  & \BLam\tr \BT\BK^{t-1} & 0
\end{bmatrix}\\
&\approx 
\begin{bmatrix}
0 & 0 & 0 \\
\rho^{t-1}\Bz\Bystar \BLam  & \rho^t z\Bystar & 0\\ 
\rho^{t-2}\BLam\tr \BT z\Bystar \BLam  & \rho^{t-1}\BLam\tr \BT z\Bystar & 0
\end{bmatrix}
\\
&= \begin{bmatrix}
0 & 0 & 0 \\
\rho^{t-1}\Bz\Bz\tr  \BT  \BLam  & \rho^t \Bz\Bz\tr  \BT  & 0\\ 
\rho^{t-2}\BLam\tr  \BT  \Bz\Bz\tr  \BT \BLam  & \rho^{t-1}\BLam\tr  \BT  \Bz\Bz\tr  \BT  & 0
\end{bmatrix}
\end{align*}

Now compute 
\begin{align*}
(\BA')^{(t)} &= \Big( \sum_{i=0}^{t-1} (\BK')^i \Big) \BLam' \\
&\approx  \begin{bmatrix}
\BI & 0 \\
\frac{\rho^{t-1}-1}{\rho-1} \Bz\Bz\tr\BT \BLam& \frac{\rho^t
  -1}{\rho-1}\Bz\Bz\tr  \BT \BLam\\
\frac{\rho^{t-2}-1}{\rho-1}\BLam\tr  \BT \Bz\Bz\tr \BT \BLam  & 
\frac{\rho^{t-1}-1}{\rho-1}\BLam\tr  \BT  \Bz\Bz\tr\BT \BLam 
\end{bmatrix}
\end{align*}

Let $\Bc = \Bystar \BLam =\Bz\tr\BT\BLam$ (a $1 \times n$ vector, since $n=|L|$).
This is  the same vector we used for the core graph.  Taking the limit,
\begin{align*}
\lim_{t \rta \infty} \dfrac{\rho -1}{\rho^{t-1}-1} \BA^{(t)}
& = \begin{bmatrix}
0 & 0 \\
\Bz \Bc & \rho \Bz \Bc \\
\rho^{-1} \Bc\tr \Bc  & \Bc\tr \Bc & 0
\end{bmatrix} 
\end{align*}
Now let $\Bu$ be the vector of variables for $L$ and $\Bu'$ the vector
of variables for $L'$.  We have for $e \in E$
\begin{align*}
\lim_{t \rta \infty} \dfrac{\rho -1}{\rho^t-1} \Ba_e^{(t)} &=
\begin{bmatrix}
\Bz \Bc & \rho \Bz \Bc
\end{bmatrix} \\
\lim_{t \rta \infty}\left( x_e ^{(t)}\right) ^{\frac{\rho -1}{\rho^t-1}}
&= 
\lim_{t \rta \infty} 
\Big( (\Bu')^{\Bc}\Bu^{\rho\Bc }\Big)^{\Bz_e }\\
\intertext{For $\bar{e_i}\in E_2$ recall that $\bar{e_i}$ is the edge from $\ell_i\in L$ to 
$\ell_i' \in L'$}
\lim_{t \rta \infty}\left( x_{\bar{e_i}} ^{(t)}\right) ^{\frac{\rho -1}{\rho^t-1}}
&= 
\lim_{t \rta \infty} 
\Big( (\Bu')^{\rho^{-1}\Bc}\Bu^{\Bc }\Big)^{\Bc_i }
\end{align*}

For any edge $e$, $x_e$ goes to 0 if $(\Bu')^\Bc\Bu^{\rho\Bc}<1$ and to 
$\infty$ if $(\Bu')^{\Bc}\Bu^{\rho\Bc} >1$. 
Noting that $(\Bu')^\Bc \Bu^{\rho\Bc} =\prod_{\ell\in L} (u'_\ell(u_\ell)^\rho)^{\Bc_\ell}$
establishes the theorem.
\end{proof}

\begin{example}
Consider each of  the graphs below 
as the core
of a trapping set in which all bit nodes have degree~3.
Consider the bit nodes of degree~2 in the figure as attached  to another
bit node, which is not shown.  All three appear in the trapping set ontology \cite{VasicOntology}, and the first graph is a prominent example in the papers of Planjery, Vasic and co-authors.
\[
\xymatrix{
& \bullet\ar@{<->}[dl]\ar@{<->}[d]\ar@{<->}[dr] & \\
\bullet & \bullet & \bullet\\
& \bullet\ar@{<->}[ul]\ar@{<->}[u]\ar@{<->}[ur] & 
}
\quad
\xymatrix{
& \bullet\ar@{<->}[dl]\ar@{<->}[dr] & \\
\bullet \ar@{<->}[d]  \ar@{<->}[rr]&& \bullet \ar@{<->}[d]\\
\bullet \ar@{<->}[rr]&& \bullet\\
}
\qquad 
\xymatrix{
\bullet \ar@{<->}[dd]  \ar@{<->}[r] &\bullet \ar@{<->}[dd]  \ar@{<->}[r] &\bullet \ar@{<->}[dd] \\
&&\\
\bullet  \ar@{<->}[r] &\bullet   \ar@{<->}[r] &\bullet \\
}
\]
The first two graphs  lead to $(5,3)$ trapping sets and the third
graph leads to a $(6,2)$ trapping set.    
The flow matrix for the $(5,3)$ with girth~8 has index of imprimitivity~4 with
$\rho=\sqrt{2}$.  The flow matrix for the $(5,3)$ with  girth~6 is
primitive, since the gcd of admissible cycles is~1,  and $\rho=1.424$, slightly larger than $\sqrt{2}$.
The flow matrix for the $(6,2)$ graph has index of imprimitivity~2,
with $\rho=1.353$.

The first graph is particularly interesting.
Label the edges from the top node down as $1,2,3$ and from the bottom node up as $4,5,6$, in each case starting from the left to the right.  Order the edges as follows:  $1,2,3,\bar{1},\bar{2}, \bar{3}, 4,5,6,\bar{4},\bar{5},\bar{6}$.  Let $\BI$ be a $3\times 3$ identity matrix, let $\Bone$ be a $3\times 1$ vector of with one in each entry, and let $\BJ = \Bone\Bone\tr$. 
Let $\Bzero$ be a zero matrix or zero vector as determined by context.
With an appropriate ordering for $L$, we get the following structural matrices.
\begin{alignat*}{2}
\BLam &= \begin{bmatrix}
\Bzero & \Bone & \Bzero \\
\BI & \Bzero& \Bzero\\
\Bzero &  \Bzero & \Bone \\
\BI & \Bzero& \Bzero\\
\end{bmatrix} & \qquad
\BT &= \begin{bmatrix}
\Bzero & \BI & \Bzero &\Bzero \\
\BI & \Bzero& \Bzero&\Bzero\\
\Bzero &  \Bzero & \Bzero &\BI \\
 \Bzero& \Bzero&\BI&\Bzero
\end{bmatrix} 
\\
\BLam\tr \BT &= \begin{bmatrix}
\BI & \Bzero& \BI & \Bzero\\
\Bzero &\Bone \tr&  \Bzero & \Bzero \\
\Bzero& \Bzero & \Bzero &\Bone\tr
\end{bmatrix} & \qquad
\BK &= \begin{bmatrix}
\Bzero & \BJ-\BI & \Bzero &\Bzero \\
\Bzero& \Bzero& \BI &\Bzero\\
\Bzero &  \Bzero & \Bzero &\BJ-\BI \\
\BI & \Bzero& \Bzero&\Bzero
\end{bmatrix}
\end{alignat*}

One might expect that the $(5,3)$ trapping set with girth~8 performs
better under the sum-product algorithm than the $(5,3)$ trapping set
with girth~6, but the reverse is true, as can be seen from the
performance curves in Figure~\ref{f:trap(5,3)}.   
The primitivity of the flow matrix of the girth~6 graph makes it less susceptible to channel noise.
\end{example}

\begin{figure}[htb]
\centering
\epsfig{figure=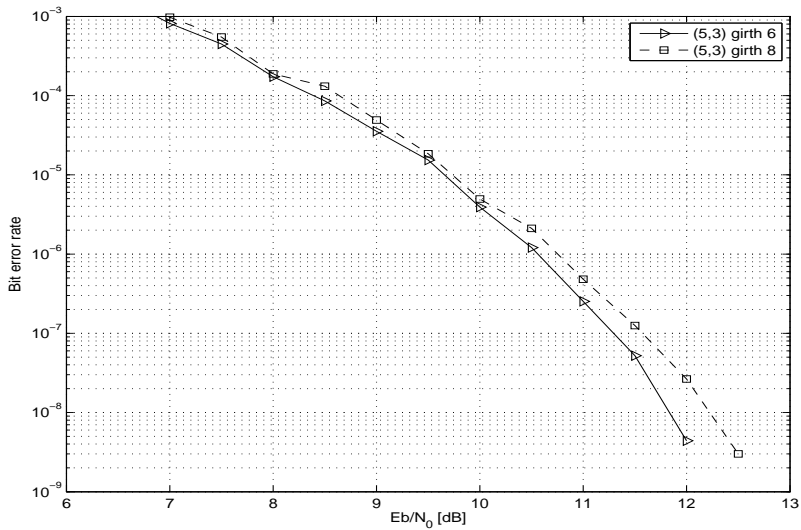,height=2.8in, width=4.2in, angle=0}
\caption{\label{f:trap(5,3)}
Comparison of the performance of the SPA on the core graph of  two
$(5,3)$ trapping sets.  The termination criterion was $\eps = 10^{-8}$.}
\end{figure}

We conclude this section with a suggested program that may result in  
closed-form probability calculations for the  decoding failure on any
graph, analogous to the results of Chilappagari et al~\cite{VasicFloor}
on the binary symmetric channel (see also \cite{VasicBeyond,VasicMultilevel}). 
A key to their analysis of decoding algorithms is an ``isolation
assumption,'' which gives conditions under which the behavior of the decoding
algorithm on the trapping set can be  isolated from  the behavior
on the rest of the graph.   The essence of the isolation assumption is
that for some length $M$ there is no  path through the complement of
the trapping set,  between nodes of the trapping set.  Thus for $M$
iterations of the algorithm,  the affect of one edge message of the
trapping set on another edge message of the trapping set is completely
determined by the message passing of the trapping set and unaffected
by paths in the complement.

In order to simplify the discussion and make definite statements, we
will assume that all bits in the original graph have degree~$3$. Each
bit in the core of a trapping set has degree $2$ or $3$, so it is
``missing" at most one check. When we add in the new bits $\ell'$ to
the core, the core bits now have degree $3$ or $4$. For those of
degree $4$, the check we added does not correspond to a check in
actual graph, so set the $u_{\ell'}$ to $1$; the remaining $\ell'$ --
actually, their attendant checks -- can be thought of as representing
the incoming information to the trapping set from the rest of the
graph. Under the assumption, then, that these incoming messages are
essentially independent of the outgoing messages passed out of the
trapping set, one should be able to obtain an expression for the
probability that some bit is in error after the $t^{th}$ iteration of
the algorithm via the initial distribution on the core bits,
density-evolution analysis on the check-to-bit messages from the new
checks for which $u_{\ell'}$ was not set equal to $1$ (either via the
methods of~\cite{RichUrb:Book} or via Monte Carlo simulation), and the
formula in Theorem~\ref{t:trapset}.

\section{Concluding Remarks}
In this section we summarize our results on regions of convergence, we
comment on covering graphs, and we present some experimental results
illustrating  the preceeding theory.

\subsubsection*{Region of Convergence}
When the matrix $\BK$ is primitive, Theorem~\ref{t:primitive} shows 
that the SPA will converge except on a
region of measure 0, defined by the equation $\Bu^{\Bc}=1$.
For regular graphs, the SPA is therefore a maximum likelihood decoder, since
this hypersurface is $\prod_{\ell \in L} u_\ell=1$, which corresponds to
$\prod_{\ell\in L} p_\ell(1) = \prod_{\ell \in L} p_\ell(0)$.  For an irregular
graph, the components of $\Bc$ differ, and the SPA may place more
emphasis on certain $u_\ell$.   

When the matrix $\BK$ is imprimitive, there is a region of positive
measure on which the SPA does not converge.  The simplest example is
$K_{3,2}$, the complete bipartite graph on $3$ bits and $2$ checks.  
The SPA  will converge to the all-$0$ codeword on the region
defined by $u_1^2u_2<1$ and $ u_1u_2^2<1$ and to the all-$1$ codeword
on the region $u_1^2u_2>1$ and $ u_1u_2^2 >1$.
The SPA will not
converge on the region between the curves $u_1u_2^2=1$ and
$u_1^2u_2=1$. 

The secondary eigenvalues may also affect performance.
Suppose we use the criterion for  termination of the SPA 
defined ealier:  decode to the $0$-codeword if for all $\ell \in L$,
$\hat{u}_\ell <\eps$ and to the all-$1$-codeword if for all $\ell\in
L$, $\hat{u}_\ell> 1/\eps$. 
If the secondary eigenvalues  are close to the Perron eigenvalue  or if there are
many large secondary eigenvalues,  it may take a longer time for the
Perron vector to become dominant.  
Since $\hat{u}_\ell$ involves large powers of the $u_\ell$ (as $t$
gets large)  it tends toward extremes ($0$ or $\infty$),
and this may cause the algorithm to terminate early and incorrectly.

\subsubsection*{Covering Graphs}
It is common to construct LDPC codes by using covering graphs, but
these observations suggest that there may be an inherent weakness.
One can show that a covering graph
will inherit all the eigenvalues of a base graph.  Thus, imprimitivity
or a small spectral gap in a base graph yield the same
properties in a covering graph.  
This argument reverses the usual concern with graph covers: that
pseudo-codewords---essentially codewords from a covering graph of $B$---can
influence the performance of the SPA on $B$.  When check nodes have degree
$2$, the only pseudo-codewords are constant vectors, and the only
extremal pseudo-codewords are  the all-$0$ vector and the all-$1$
vector. In this case, analysis of pseudo-codewords can say nothing
about variation in decoding performance.  Yet there are clear
differences due to imprimitivity, and to other, more subtle, properties,
as the following examples illustrate.

\subsubsection*{Some Experiments}
The bipartite graph derived from the following graph is a $2$-fold cover
of $K_{3,2}$.
\vspace*{.4in}
\[
\xymatrix{
\bullet \ar@/^1pc/@{<->}[r] \ar@/_1pc/@{<->}[r] \ar@/^3pc/@{<->}[rrr]
& \bullet \ar@{<->}[r] 
& \bullet \ar@/^1pc/@{<->}[r] \ar@/_1pc/@{<->}[r]
&\bullet
}
\]

\vspace*{.2in}
Figure~\ref{f:undirected_bipartite} shows the bipartite graph derived from
$K_4$, which is not a cover of $K_{32}$.  The flow matrix for the
$2$-fold cover is imprimitive of index $2$, while the flow matrix for
the graph derived from $K_4$  is primitive.  
Figure~\ref{f:length4performance} shows
performance of the SPA with convergence parameters $\eps=10^{-2}$,
$10^{-4}$ and $10^{-8}$.  The curve for the bipartite graph from $K_4$
maintains roughly a $0.5$ dB gain over the $2$-fold cover at all
values of $\eps$. 
Figure~\ref{f:length10performance}
shows performance curves,using the same values for $\eps$,
for a five-fold cover of $K_{32}$, which is imprimitive of index $2$, and the bipartite graph derived from
the Peterson graph, which has a primitive
flow matrix.  We see an improvement of roughly $0.5$ dB at $\eps= 10^{-8}$,
but with less discriminating choices of convergence parameter the Peterson graph actually performs  worse.

\begin{figure}[htb]
\centering
\epsfig{figure=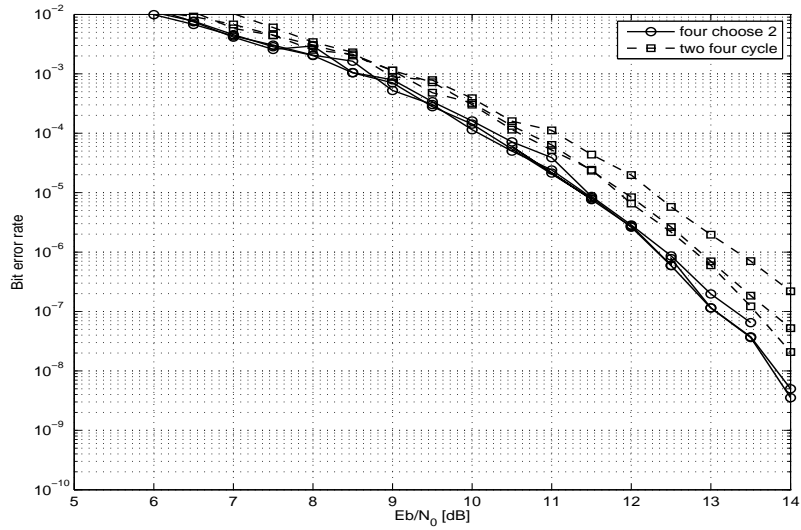,height=2.8in, width=4.2in, angle=0}
\caption{\label{f:length4performance}
The performance of a 2-cover of $K_{32} $ and the bipartite graph derived from $K_4$ 
for $\eps=10^{-2}$, $10^{-4}$, $10^{-8}$.}
\end{figure}

\begin{figure}[hbt]
\centering
\epsfig{figure=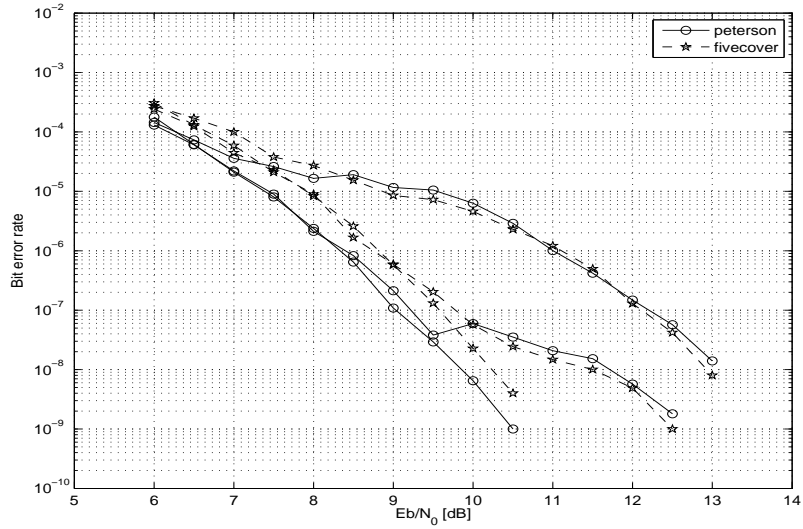,height=2.8in, width=4.2in, angle=0}
\caption{\label{f:length10performance}
The performance of a 5-cover of $K_{32} $ and the bipartite graph derived from the Peterson graph for $\eps=10^{-2}$, $10^{-4}$, $10^{-8}$.}
\end{figure}


Both $K_4$ and the Peterson graph are $3$-{\em cages}, that is, $3$-regular graphs having the minimum number of nodes for their respective 
girths, which are $3$ and $5$.
(Note that the bipartite graphs derived from these graphs have girth
that is twice as large.)
One might attribute the performance gain to the large girth, but 
by itself this is not an assurance of asymptotic optimality.
Below are the connected three-to-one covers of $K_{3,2}$. One has
girth~$4$; one has two $2$-cycles; and one has three $2$-cycles.  
(Again, double these values for the bipartite graph derived from these graphs.)

\vspace*{.6in}
\[
\xymatrix{
\bullet \ar@{<->}[r] \ar@/_2pc/@{<->}[rrr] \ar@/^4pc/@{<->}[rrrrr]
& \bullet \ar@{<->}[r]  \ar@/^2pc/@{<->}[rrr]
& \bullet \ar@{<->}[r] \ar@/_2pc/@{<->}[rrr]
&\bullet \ar@{<->}[r] 
& \bullet \ar@{<->}[r] 
&\bullet
}\]

\vspace{.6in}

\[
\xymatrix{
\bullet \ar@/^1pc/@{<->}[r] \ar@/_1pc/@{<->}[r] \ar@/_2pc/@{<->}[rrr]
& \bullet \ar@{<->}[r] 
& \bullet \ar@{<->}[r] \ar@/^2pc/@{<->}[rrr]
&\bullet \ar@{<->}[r] 
& \bullet \ar@/^1pc/@{<->}[r] \ar@/_1pc/@{<->}[r]
&\bullet
}
\]
\vspace{.6in}

\[\xymatrix{
\bullet \ar@/^1pc/@{<->}[r] \ar@/_1pc/@{<->}[r] \ar@/^4pc/@{<->}[rrrrr]
& \bullet \ar@{<->}[r] 
& \bullet \ar@/^1pc/@{<->}[r] \ar@/_1pc/@{<->}[r]
&\bullet \ar@{<->}[r] 
& \bullet \ar@/^1pc/@{<->}[r] \ar@/_1pc/@{<->}[r]
&\bullet
}\]

\vspace*{.2in}
The girth-~$4$ graph is actually a  $(3,4)$-cage.
Since all these graphs are covers of $K_{3,2}$ they are all
imprimitive and have the same asymptotic performance, as can be seen in 
Figure~\ref{f:length6performance}, where the performace
curves overlay each other  for $\eps= 10^{-8}$.  
Observe that for  $\eps= 10^{-2}$ and $10^{-4}$, the girth-$4$
graph is substantially better.  
We have no exact explanation for this, but we suspect it is
due to some property of the second largest eigenvalues.
For each of these graphs, there are $8$
eigenvalues (counted with multiplicity) of modulus $\sqrt{2}$, 
but the number of {\em distinct} eigenvalues of modulus $\sqrt{2}$ differs: 
the girth-$4$ graph has  only two, namely $\pm \sqrt{2}i$, while
the graph with two $2$-cycles has $6$ such and the graph with three $2$-cycles
has~4.   

\begin{figure}[htb]

\centering
\epsfig{figure=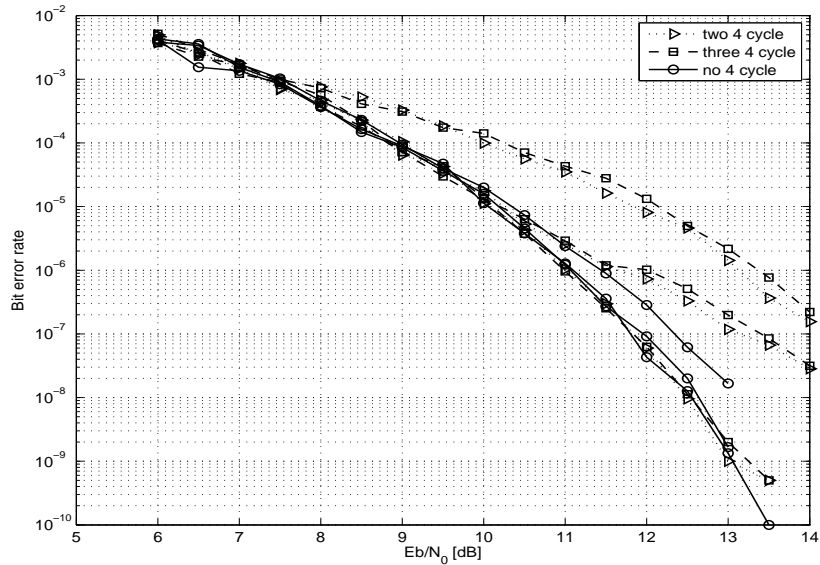,height=3in, width=4.3in, angle=0}
\caption{\label{f:length6performance}
The performance of three 3-covers of $K_{32} $ for 
$\eps=10^{-2}$, $10^{-4}$, $10^{-8}$.}
\end{figure}

The preceeding figures also show that the numerical precision used for
implementing the SPA  plays a strong role in the performance.  
In particular, at $\eps = 10^{-4}$ there is a very identifiable error
floor in several of  the performance curves. 
This corroborates a phenomenon that we observed in our experimental results on $(3,6)$-regular codes
of lengths $282$ and $1002$ conducted for \cite{OSBreWol05}.  
In the course of determining the performance of a number of codes, we collected all
vectors that caused decoding failure.  Subsequent testing
revealed that the vast majority of these vectors could be 
successfully decoded by reducing the termination parameter $\eps$.  
This raises two important questions:  To what  extent is the error floor
an artifact of numerical precision as opposed to non-convergence of
the algorithm (as is the case in the above examples)?  
Is there a method for creating graphs that are relatively
resistant to numerical imprecision?

\section*{Acknowledgements} Joshua Lampkins contributed to the research for this article.  In particular, he noticed 
the impact on the convergence of the sum-product algorithm when the graph $G$ is bipartite.

\bibliographystyle{plain}
\bibliography{deg2}

\begin{thebibliography}{10}

\bibitem{Beineke:1978}
Lowell~W. Beineke and Robin~J. Wilson, editors.
\newblock {\em Selected topics in graph theory}.
\newblock Academic Press Inc. [Harcourt Brace Jovanovich Publishers], London,
  1978.

\bibitem{VasicFloor}
S.K. Chilappagari, S.~Sankaranarayanan, and B.~Vasic.
\newblock Error floors of ldpc codes on the binary symmetric channel.
\newblock In {\em IEEE International Conference on Communications (ICC '06)},
  volume~3, pages 1089 --1094, June 2006.

\bibitem{HakimiBredeson}
Seifollah~Louis Hakimi and Jon~G. Bredeson.
\newblock Graph theoretic error-correcting codes.
\newblock {\em IEEE Trans. Information Theory}, IT-14:584--591, 1968.

\bibitem{HanRyan09}
Yang Han and W.E. Ryan.
\newblock Ldpc decoder strategies for achieving low error floors.
\newblock In {\em Information Theory and Applications Workshop, 2008}, pages
  277 --286, Jan. 2008.

\bibitem{Harary:1969}
Frank Harary.
\newblock {\em Graph theory}.
\newblock Addison-Wesley Publishing Co., Reading, Mass.-Menlo Park,
  Calif.-London, 1969.

\bibitem{Horn}
Roger~A. Horn and Charles~R. Johnson.
\newblock {\em Matrix analysis}.
\newblock Cambridge University Press, Cambridge, 1985.

\bibitem{Vasic08}
M.~Ivkovic, S.K. Chilappagari, and B.~Vasic.
\newblock Eliminating trapping sets in low-density parity-check codes by using
  {T}anner graph covers.
\newblock {\em IEEE Transaction on Information Theory,}, 54(8):3763 --3768,
  Aug. 2008.

\bibitem{Kot_cycle}
Ralf Koetter, Wen-Ching~W. Li, Pascal~O. Vontobel, and Judy~L. Walker.
\newblock Pseudo-codewords of cycle codes via zeta functions.
\newblock In {\em Proceedings IEEE Inform. Theory Workshop}, pages 7--12, San
  Antonio, TX, USA, 2004. IEEE.

\bibitem{Kot_pseudo}
Ralf Koetter, Wen-Ching~W. Li, Pascal~O. Vontobel, and Judy~L. Walker.
\newblock Characterizations of pseudo-codewords of (low-density) parity-check
  codes.
\newblock {\em Adv. Math.}, 213(1):205--229, 2007.

\bibitem{LampBrOS}
S.~Lampoudi, J.~Brevik, and M.~E. {O'S}ullivan.
\newblock Combinatorial properties as predictors for the performance of the
  sum-product algorithm.
\newblock preprint, 2010.

\bibitem{Mac99}
D.~J.~C. Mac{K}ay.
\newblock Good error-correcting codes based on very sparse matrices.
\newblock {\em IEEE Trans. Inform. Theory}, 45(2):399--431, 1999.

\bibitem{MacPo03}
David J.~C. MacKay and M.~J. Postol.
\newblock Weaknesses of {M}argulis and {R}amanujan--{M}argulis low-density
  parity-check codes.
\newblock In {\em Proceedings of MFCSIT2002, Galway}, volume~74 of {\em
  Electronic Notes in Theoretical Computer Science}. Elsevier, 2003.

\bibitem{Minc}
Henryk Minc.
\newblock {\em Nonnegative matrices}.
\newblock Wiley-Interscience Series in Discrete Mathematics and Optimization.
  John Wiley \& Sons Inc., New York, 1988.
\newblock A Wiley-Interscience Publication.

\bibitem{OSBreWol05}
M.~E. O'Sullivan, J.~Brevik, and R.~Wolski.
\newblock The performance of {LDPC} codes with large girth.
\newblock In {\em Proc. of the 43-th Annual Allerton Conference on
  Communication, Control, and Computing}, Sept. 2005.

\bibitem{VasicBeyond}
S.K. Planjery, S.K. Chilappagari, B.~Vasic, D.~Declercq, and L.~Danjean.
\newblock Iterative decoding beyond belief propagation.
\newblock In {\em Information Theory and Applications Workshop (ITA), 2010},
  pages 1 --10, Jan. 2010.

\bibitem{VasicMultilevel}
S.K. Planjery, D.~Declercq, S.K. Chilappagari, and B.~Vasic~and.
\newblock Multilevel decoders surpassing belief propagation on the binary
  symmetric channel.
\newblock In {\em Proceedings, 2010 IEEE International Symposium on Information
  Theory}, pages 769 --773, June 2010.

\bibitem{Rich03}
T.~Richardson.
\newblock Error floors of {LDPC} codes.
\newblock In {\em Proc. of the 42-nd Annual Allerton Conference on
  Communication, Control, and Computing}, pages 1426--1435, 2004.

\bibitem{RichShokUrb01}
T.~Richardson, A.~Shokrollahi, and R.~Urbanke.
\newblock Design of capacity-approaching irregular low-density parity-check
  codes.
\newblock {\em IEEE Trans. Inform. Theory}, 47(2):619--639, 2001.

\bibitem{RichUrb:Book}
Tom Richardson and R\"udiger Urbanke.
\newblock {\em Modern Coding Theory}.
\newblock Cambridge University Press, 2008.

\bibitem{StarkTerras}
H.~M. Stark and A.~A. Terras.
\newblock Zeta functions of finite graphs and coverings.
\newblock {\em Adv. Math.}, 121(1):124--165, 1996.

\bibitem{TillZemor}
Jean-Pierre Tillich and Gilles Z{\'e}mor.
\newblock Optimal cycle codes constructed from {R}amanujan graphs.
\newblock {\em SIAM J. Discrete Math.}, 10(3):447--459, 1997.

\bibitem{VasicOntology}
B.~Vasic, S.K. Chilappagari, D.V. Nguyen, and S.K. Planjery.
\newblock Trapping set ontology.
\newblock In {\em 47th Annual Allerton Conference on Communication, Control,
  and Computing, 2009}, pages 1 --7, Sept. 2009.

\end{thebibliography}
\end{document}